\documentclass{article}

\usepackage[preprint]{neurips_2026}

\usepackage[utf8]{inputenc} 
\usepackage[T1]{fontenc}    
\usepackage{hyperref}       
\usepackage{url}            
\usepackage{booktabs}       
\usepackage{amsfonts}       
\usepackage{nicefrac}       
\usepackage{microtype}      
\usepackage{xcolor}         

\usepackage{amsmath,amssymb,amsthm}
\usepackage{algorithm}
\usepackage{algpseudocode}
\usepackage{multirow}
\usepackage{wrapfig}
\usepackage{caption}
\usepackage{graphicx}

\theoremstyle{definition}
\newtheorem{definition}{Definition}[section]
\newtheorem{assumption}[definition]{Assumption}

\theoremstyle{plain}
\newtheorem{theorem}[definition]{Theorem}

\newtheorem{corollary}[definition]{Corollary}
\newtheorem{lemma}[definition]{Lemma}

\theoremstyle{remark}
\newtheorem{remark}[definition]{Remark}

\title{Higher-order Persistence Diagrams}

\author{%
  Charles Fanning \\
  School of Data Science and Analytics\\
  Kennesaw State University\\
  Kennesaw, GA 30144 \\
  \texttt{cfannin8@students.kennesaw.edu} \\
  \And
  Mehmet Aktas \\
  School of Data Science and Analytics\\
  Kennesaw State University\\
  Kennesaw, GA 30144 \\
  \texttt{maktas1@kennesaw.edu} \\
}

\begin{document}

\maketitle

\begin{abstract}
Many topological data analysis (TDA) pipelines compute large collections of persistence diagrams, yet vectorizations and kernel methods discard the rank-induced implication relations among persistence intervals that are essential for faithful structural comparison and interpretability. We introduce higher-order persistence diagrams, a recursive construction in which containment relations among persistence intervals define higher-order persistence intervals. This construction performs comparison and aggregation directly on persistence diagrams and preserves interval-level structure. We use harmonic analysis to reduce frequency-space evaluations of aggregated diagrams to zeta transforms. This reduction avoids explicit construction of higher-order diagrams and replaces quadratic pair enumeration with nearly linear-time evaluation. Experiments on random network models show substantial speedups over explicit aggregation. Anonymized code is available at \url{https://anonymous.4open.science/r/higher-order-persistence-8201}.
\end{abstract}

\section{Introduction}\label{sec1}

Collections of persistence diagrams raise an aggregation problem: a faithful summary should preserve the rank-induced implication relations among persistence intervals, since these relations encode the interval-level structure that makes diagrammatic comparisons interpretable \cite{bubenik2012landscapes,adams2017persistenceimages,pun2022survey}. Persistence-diagram spaces can themselves serve as inputs for further persistence-diagram constructions, and virtual persistence diagrams provide the natural comparison setting in which these higher-order aggregates also inherit harmonic representations \cite{bubenik2022virtualpd,fanningaktas2025rkhsvpd,fanningaktas2026banachrkhs,fanningaktas2026randomwalks}. For example, in a machine learning setting where persistence diagrams are computed at each training epoch, one obtains a sequence of diagrams that evolve over time, and these may be aggregated into a unified representation for stable comparison across models or training runs \cite{naitzat2020topology}. Persistent homology encodes filtered homology through rank functions and interval data \cite{edelsbrunner2000topological,zomorodian2005computing,oudot2015persistence}. Generalized persistence diagrams interpret persistence diagrams as M\"obius inversions of rank invariants, and this construction naturally admits signed interval multiplicities \cite{patel2018generalized,kimmemoli2021generalized,betthauser2021graded}. We call the individual interval features in a persistence diagram its diagram \textit{atoms}. Rank structure induces a preorder on diagram atoms whose containment and implication relations define the hyperedge structure used for aggregation. Virtual persistence diagrams place such data in a Grothendieck group with signed multiplicities and, for \(p=1\), admit the unique translation-invariant Wasserstein metric \cite{bubenik2022virtualpd}. Under uniform discreteness, these groups are locally compact abelian, their characters define a harmonic analysis, and character evaluations give canonical, interpretable observables \cite{fanningaktas2025rkhsvpd, fanningaktas2026banachrkhs, fanningaktas2026randomwalks}. These observations motivate the following requirements for diagram-internal higher-order aggregation:
\begin{enumerate}
\item turn rank structure into preorder relations on diagram atoms that make containment and implication explicit;
\item return higher-order aggregates inside the persistence-diagram hierarchy, including virtual persistence diagrams with \(W_1\) transport and linearized means;
\item extend the harmonic representations of virtual persistence diagrams to higher-order aggregates through the same character-based constructions;
\end{enumerate}
We meet these requirements by defining a recursive hierarchy \(\{D^{(n)}(X)\}_{n\ge 0}\) on a preordered metric space \((X,d,\preceq)\) of finite order dimension, in which level-\((n+1)\) atoms are preorder-compatible interval pairs of level-\(n\) atoms, diagonal classes encode degeneracy, and the \(p=1\) Wasserstein structure extends uniquely to the associated virtual groups \(K^{(n)}(X)\) and linearizations \(V^{(n)}(X)\). The aggregation operator \(\mathcal B_n:K^{(n)}(X)\times K^{(n)}(X)\to K^{(n+1)}(X)\) forms higher-order atoms from ordered pairs, and the induced aggregates \(\mathcal A_n\) and \(\overline{\mathcal A}_n\) provide canonical summaries of samples, classes, and temporal windows as diagrams within the same metric and algebraic calculus.

Our main result shows that aggregation admits a harmonic representation that converts this construction into a computable observable: for \(\xi\in K^{(n)}(X)\) with finite support and any character \(\chi_\theta\in\widehat{K^{(n+1)}(X)}\),
\[
\chi_\theta\!\left(\mathcal B_n(\xi,\xi)\right)
=
\exp\!\left(
i
\sum_{i,j\in\operatorname{supp}(\xi)}
\mathbf 1_{\{u_i\preceq^{(n)}u_j\}}
\theta_{[(u_i,u_j)]}
\xi_i\xi_j
\right),
\]
which expresses aggregation as a quadratic Fourier phase determined by preorder constraints. Its coboundary form reduces evaluation to zeta transforms over principal order ideals, so aggregated observables compute weighted sums of ordered interactions encoded by \(\preceq^{(n)}\) while avoiding explicit construction of higher-order diagrams and admitting efficient algorithms under finite order dimension \cite{pegolotti2022fast}.

\paragraph{Empirical results.}
In the random-graph experiment, the clique filtration gives each sampled graph \(G\) an \(H_1\) persistence diagram \(D(G)\) \cite{zomorodian2005computing,otter2017roadmap}. For paired samples \((G_{k,1},G_{k,2})_{k=1}^m\) from two graph-generating mechanisms, we compute the mean second-order aggregate
\[
\overline{\mathcal A}_1\!\bigl(
D(G_{1,1})-D(G_{1,2}),\ldots,D(G_{m,1})-D(G_{m,2})
\bigr).
\]
The runtime study compares explicit construction of this aggregate with its evaluation via the zeta transform using coboundary characters, without materializing the higher-order aggregate. The reported speedups measure this aggregation step only and exclude persistence computation and Wasserstein matching.

\paragraph{Contributions.}
Section~\ref{sec:main-results} defines the recursive higher-order persistence diagram hierarchy and the aggregation operators \(\mathcal B_n\), \(\mathcal A_n\), and \(\overline{\mathcal A}_n\). Subsection~\ref{subsec:theoretical-guarantees} shows that character evaluation of \(\mathcal B_n(\xi,\xi)\) yields a quadratic phase over preorder-indicator interactions and that iterated aggregation produces higher-degree polynomial phase structure governed by the induced preorder. Subsection~\ref{subsec:algorithm} develops preorder zeta-transform evaluation and establishes complexity bounds for aggregation. Section~\ref{sec:experiments} evaluates aggregation runtime on paired samples from random graph models and compares explicit construction with zeta-transform evaluation.

\section{Background and related work}
\label{sec:background}

\paragraph{Historical context.}
Persistent homology tracks topological features across a filtration by recording when each feature appears and disappears, producing a persistence diagram as a multiset of birth--death pairs (see Figure~\ref{fig:vpd-intro}) \cite{edelsbrunner2000topological, zomorodian2005computing, oudot2015persistence, chazal2013structure, cohensteiner2007stability, bubenik2022universality}. Rank invariants summarize which features persist across intervals and provide the algebraic data underlying these diagrams \cite{chazal2013structure}. In ordinary settings, this data corresponds to nonnegative counts of intervals, but in generalized algebraic settings, M\"obius inversion recovers multiplicities from rank data that may be signed \cite{patel2018generalized, kimmemoli2021generalized, betthauser2021graded}. These signed multiplicities prevent ordinary diagrams from remaining closed under subtraction, aggregation, and iteration. Grothendieck completion restores closure by embedding diagrams into an abelian group of signed diagrams compatible with the Wasserstein transport structure \cite{bubenik2022virtualpd}.

\paragraph{Virtual persistence diagrams.}

\begin{figure}[t]
\centering
\includegraphics[width=\linewidth]{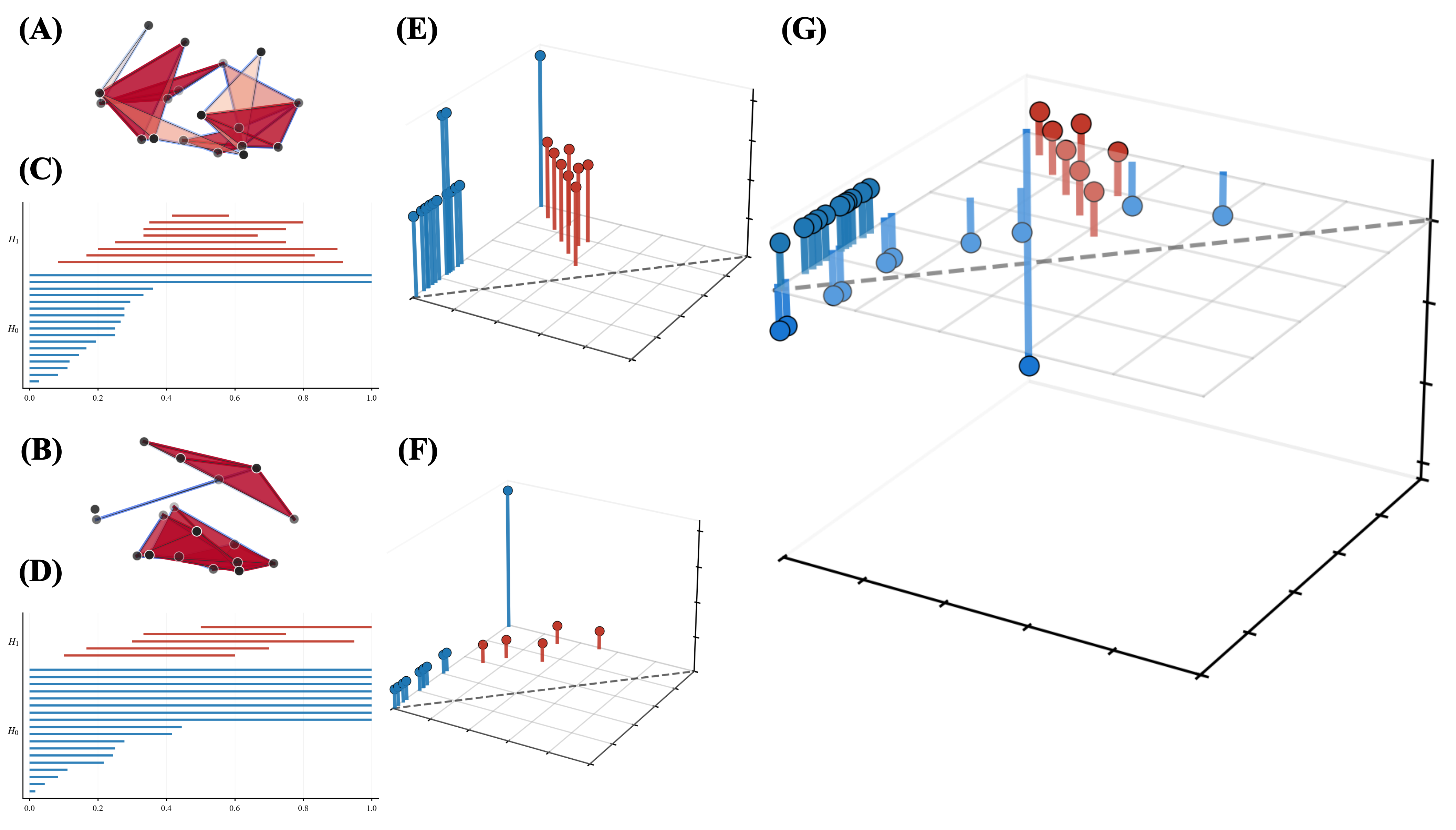}
\caption{
(A) A weighted simplicial complex defining the first filtration. 
(B) The barcode computed from the filtration in (A). 
(C) A weighted simplicial complex defining the second filtration. 
(D) The barcode computed from the filtration in (C). 
(E) The persistence diagram with birth--death multiplicities encoded by (B). 
(F) The persistence diagram with birth--death multiplicities encoded by (D). 
(G) The virtual persistence diagram \((E)-(F)\in K(X,A)\), whose signed multiplicity at each birth--death point equals the multiplicity in (E) minus the multiplicity in (F).
}
\label{fig:vpd-intro}
\end{figure}

Following \cite{bubenik2022virtualpd}, we model the diagram domain as a metric pair \((X,d,A)\), where \(A\subseteq X\) is the diagonal, and write \(X/A\) for the quotient collapsing \(A\) to a basepoint. The strengthened ground cost \(d_1(x,y):=\min\{d(x,y),d(x,A)+d(y,A)\}\), with \(d(x,A):=\inf_{a\in A}d(x,a)\), compares points directly or through the diagonal and defines a metric on \(X/A\) when the quotient is separated \cite{bubenik2022virtualpd}. We define \(D(X,A)\) as the quotient of the free commutative monoid of finite diagrams on \(X\) by the submonoid supported on \(A\), so diagonal-supported mass is identified with zero. Its Grothendieck completion \(K(X,A):=K(D(X,A))\) is the group of virtual persistence diagrams, whose elements are signed formal differences of finite diagrams \cite{bubenik2022virtualpd}. If \((X/A,d_1)\) is uniformly discrete, then \(K(X,A)\) is a discrete locally compact abelian group, and its Pontryagin dual supplies the characters used as harmonic observables below \cite{bubenik2022virtualpd, fanningaktas2026banachrkhs}.

\paragraph{Optimal transport.}
Persistence diagrams are compared by matchings, where unmatched mass may be sent to the diagonal \(A\), so the diagonal acts as a sink in the transport problem \cite{cohensteiner2007stability, oudot2015persistence, divollacombe2021optimalpartialtransport, bubenik2022virtualpd}. This gives the $1$-Wasserstein distance \(W_1\) on \(D(X,A)\) using the strengthened ground cost \(d_1\). The key property is that \(W_1\) is translation invariant, whereas this fails for \(p>1\). This invariance lets the metric extend to the group \(K(X,A)\) by \(\rho(\alpha-\beta,\gamma-\delta):=W_1(\alpha+\delta,\gamma+\beta)\) for \(\alpha,\beta,\gamma,\delta\in D(X,A)\). The resulting metric \(\rho\) makes \(K(X,A)\) an abelian metric group compatible with the underlying transport geometry \cite{bubenik2022virtualpd}.

\paragraph{Harmonic analysis.}
If $(X/A,d_1)$ is uniformly discrete, then $K(X,A)$ is a discrete locally compact abelian group whose Pontryagin dual $\widehat{K(X,A)}$ consists of characters $\chi:K(X,A)\to\mathbb T$ \cite{fanningaktas2026banachrkhs,folland2015harmonic}. In the finite case $K(X,A)\cong\mathbb Z^N$, this dual identifies with the torus $\mathbb T^N$, so characters evaluate diagrams by multiplicative phases on their atomic coordinates \cite{fanningaktas2025rkhsvpd}. By Bochner's theorem for locally compact abelian groups, continuous positive definite translation-invariant functions on $K(X,A)$ correspond to finite positive measures on $\widehat{K(X,A)}$ \cite{folland2015harmonic,bergchristensenressel1984semigroups}. Prior work develops this character-based harmonic and probabilistic analysis for virtual persistence diagram groups, including reproducing kernel Hilbert spaces, Banach completions, and random walks on $K(X,A)$ \cite{fanningaktas2025rkhsvpd,fanningaktas2026banachrkhs,fanningaktas2026randomwalks}. In this work, we use characters as observables: evaluating $\chi(\mathcal B_n(\xi,\xi))$ produces a Fourier phase that encodes preorder constraints and allows aggregation to be computed without materializing higher-order diagrams.

\section{Higher-order persistence diagrams}\label{sec:main-results}

The goal is to aggregate persistence diagrams without losing the preorder relations among their intervals. The recursive construction below forms the hierarchy \(\{D^{(n)}(X)\}_{n\ge0}\) by turning preorder-compatible interval pairs into higher-order intervals and by defining the associated diagrams recursively across levels; see Subsection~\ref{sec:preliminaries}. The aggregation maps \(\mathcal B_n\), \(\mathcal A_n\), and \(\overline{\mathcal A}_n\) sum these higher-order relations into signed differences and mean aggregates that are still persistence diagrams, as in Paragraph~\ref{par:aggregation}. Rather than construct the full higher-order aggregate, the character formulas evaluate \(\mathcal B_n(\xi,\xi)\) through Fourier phases; in the coboundary case, these phases reduce to preorder sums over ideals and filters \cite{pegolotti2022fast}; see Subsections~\ref{subsec:theoretical-guarantees} and~\ref{subsec:algorithm}.

\subsection{Preliminaries}\label{sec:preliminaries}

The construction forms level-\((n+1)\) atoms from pairs \((u,v)\in D^{(n)}(X)\times D^{(n)}(X)\) with \(u\preceq^{(n)}v\). Each atom \((u,v)\) encodes the preorder relation \(u\preceq^{(n)}v\). Iterating this construction defines \(D^{(n+1)}(X)\) from \(D^{(n)}(X)\) so that preorder relations at level \(n\) appear as atoms at level \(n+1\). Consequently, aggregation and comparison are defined on \(D^{(n+1)}(X)\) and remain within the same diagram hierarchy.

\begin{assumption}\label{ass:base-structure}
Let $X$ be a set equipped with the following data:
\begin{enumerate}
\item A metric $d$ on $X$.
\item A preorder $\preceq$ on $X$ of finite order dimension, i.e., there exists an integer $r\ge 1$ and total orders $\preceq_1,\ldots,\preceq_r$ on $X$ such that
\[
x\preceq y
\quad\Longleftrightarrow\quad
x\preceq_k y \ \text{for all } 1\le k\le r.
\]
Equivalently, there exist maps $\phi_1,\ldots,\phi_r:X\to\mathbb{R}$ such that
\[
x\preceq y
\quad\Longleftrightarrow\quad
\phi_k(x)\le \phi_k(y)
\ \text{for all } 1\le k\le r.
\]
\end{enumerate}
\end{assumption}

Since persistence diagrams \(D^{(n)}(X)\) have finite support (Definition~\ref{def:pd-monoid}), matchings involve finitely many pairs and have finite cost, so the Wasserstein distances are well defined. Starting from Assumption~\ref{ass:base-structure}, the recursion forms level-\((n+1)\) atoms from ordered pairs in \(D^{(n)}(X)\times D^{(n)}(X)\) and identifies degenerate pairs with the diagonal \(A^{(n+1)}\). It defines the preorder \(\preceq^{(n+1)}\) by \((\alpha_1,\alpha_2)\preceq^{(n+1)}(\beta_1,\beta_2)\) if and only if \(\beta_1\preceq^{(n)}\alpha_1\) and \(\alpha_2\preceq^{(n)}\beta_2\), and defines the metric \(d^{(n+1)}\) from endpoint distances and diagonal assignment. For \(p=1\), this hierarchy extends to the virtual groups \(K^{(n)}(X)\) with translation-invariant metrics \cite{bubenik2022virtualpd} and to the linear spaces \(V^{(n)}(X)\) with Wasserstein norms. The operator \(\mathcal B_n\) forms higher-order atoms from pairs \((u,v)\) with \(u\preceq^{(n)}v\), and \(\mathcal A_n\) and \(\overline{\mathcal A}_n\) sum and average these aggregates over collections of diagrams.

\begin{figure}[t]
\centering
\includegraphics[width=\linewidth]{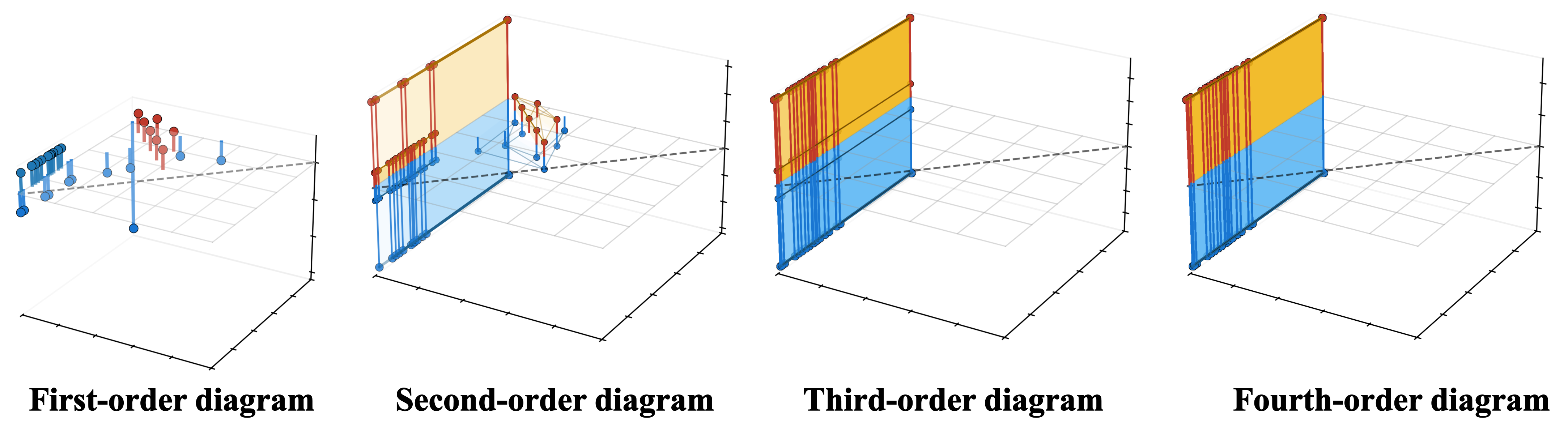}
\caption{
The virtual persistence diagram of order \(1\) from Figure~\ref{fig:vpd-intro}. Blue pins are \(H_0\) features, i.e., connected components, and red pins are \(H_1\) features, i.e., cycles.  Yellow surfaces are higher-order persistence intervals induced by successive applications of the aggregation operator. Applying the aggregation operator recursively produces virtual persistence diagrams of orders \(2\), \(3\), and \(4\).
}
\label{fig:higher-order-single}
\end{figure}

Let \((X,d,\preceq)\) be a preordered metric space and fix \(p\in[1,\infty]\). For a pair \((Y,B)\) with \(B\subseteq Y\), define \(D(Y):=\{\alpha:Y\to\mathbb{N}\mid \alpha \text{ has finite support}\}\) and \(D(Y,B):=D(Y)/D(B)\), the quotient of the free commutative monoid on \(Y\) by the submonoid supported on \(B\). We use the base-level convention \(D^{(0)}(X):=X\), \(d^{(0)}:=d\), and \(\preceq^{(0)}:=\preceq\); for \(n\ge 1\), \(D^{(n)}(X)\) denotes the level-\(n\) diagram monoid. Assume inductively that \(D^{(n)}(X)\) carries a metric \(d^{(n)}\) and a preorder \(\preceq^{(n)}\). We define the level-\((n+1)\) structures as follows.
\begin{enumerate}
\item For $n\ge 0$, define the level-$(n+1)$ interval space by setting
\[
X^{(n+1)}:=D^{(n)}(X)\times D^{(n)}(X).
\]

\item Equip $X^{(n+1)}$ with the preorder
\[
(\alpha_1,\alpha_2)\preceq^{(n+1)}(\beta_1,\beta_2)
\Longleftrightarrow
\beta_1\preceq^{(n)}\alpha_1
\text{ and }
\alpha_2\preceq^{(n)}\beta_2.
\]

\item Define the diagonal
\[
A^{(n+1)}
:=
\{(\alpha,\beta)\in X^{(n+1)}
\mid
\alpha\preceq^{(n)}\beta
\text{ and }
\beta\preceq^{(n)}\alpha
\}.
\]

\item For $u=(\alpha_1,\alpha_2)\in X^{(n+1)}$, define its distance to the diagonal by
\[
d^{(n+1)}(u,A^{(n+1)})
:=
\inf_{(\eta_1,\eta_2)\in A^{(n+1)}}
\left\|
\bigl(
d^{(n)}(\alpha_1,\eta_1),
d^{(n)}(\alpha_2,\eta_2)
\bigr)
\right\|_p.
\]

\item For $u=(\alpha_1,\alpha_2)$ and $v=(\beta_1,\beta_2)$ in $X^{(n+1)}$, define $d^{(n+1)}(u,v) :=$
\[
\min\biggl\{
\left\|
\bigl(
d^{(n)}(\alpha_1,\beta_1),
d^{(n)}(\alpha_2,\beta_2)
\bigr)
\right\|_p,
d^{(n+1)}(u,A^{(n+1)})
+d^{(n+1)}(v,A^{(n+1)})
\biggr\}.
\]
\end{enumerate}

\begin{definition}\label{def:pd-monoid}
For $n\ge 1$, define the $n$--th persistence diagram monoid of $(X,d,\preceq)$ by $D^{(n)}(X):=D(X^{(n)},A^{(n)})$.
\end{definition}

Assume inductively that \(D^{(n)}(X)\) carries a preorder \(\preceq^{(n)}\) of order dimension \(r_n\), represented by maps \(\phi_1^{(n)},\ldots,\phi_{r_n}^{(n)}:D^{(n)}(X)\to\mathbb{R}\), so that \(\alpha\preceq^{(n)}\beta\) if and only if \(\phi_k^{(n)}(\alpha)\le \phi_k^{(n)}(\beta)\) for all \(1\le k\le r_n\). Set \(X^{(n+1)}:=D^{(n)}(X)\times D^{(n)}(X)\), and define \((\alpha_-,\alpha_+)\preceq^{(n+1)}(\beta_-,\beta_+)\) if and only if \(\beta_-\preceq^{(n)}\alpha_-\) and \(\alpha_+\preceq^{(n)}\beta_+\). Let \(\pi_-\) and \(\pi_+\) denote the two coordinate projections, and set \(\Phi_k^-:=-\phi_k^{(n)}\circ\pi_-\) and \(\Phi_k^+:=\phi_k^{(n)}\circ\pi_+\) for \(1\le k\le r_n\). Then \(u\preceq^{(n+1)}v\) if and only if \(\Phi_k^-(u)\le\Phi_k^-(v)\) and \(\Phi_k^+(u)\le\Phi_k^+(v)\) for all \(1\le k\le r_n\), so \(\preceq^{(n+1)}\) has order dimension at most \(2r_n\). Define \(A^{(n+1)}:=\{(\alpha,\beta)\in X^{(n+1)}:\alpha\preceq^{(n)}\beta\text{ and }\beta\preceq^{(n)}\alpha\}\).

We equip $D^{(n+1)}(X)$ with two induced structures:
\begin{enumerate}
\item For $\Gamma,\Lambda\in D^{(n+1)}(X)$, define the $p$--Wasserstein distance
\[
W_p^{(n+1)}(\Gamma,\Lambda)
:=
\inf_{\sigma}
\left\|
\bigl(d^{(n+1)}(u_i,v_i)\bigr)_i
\right\|_p,
\]
where \(\sigma=\sum_i(u_i,v_i)\) ranges over finite matchings in the metric pair \((X^{(n+1)},A^{(n+1)})\) representing \(\Gamma\) and \(\Lambda\).

\item Define $\Gamma\preceq^{(n+1)}\Lambda$ if and only if there exists a matching $\sigma=\sum_i(u_i,v_i)$ such that $u_i\preceq^{(n+1)} v_i$ for all $i$.
\end{enumerate}

\paragraph{Virtualization and linearization.}

\begin{definition}\label{def:virtual-pd}
For $n\ge 1$, define the $n$--th virtual persistence diagram group as the Grothendieck completion of the $n$--th persistence diagram monoid $K^{(n)}(X):=K(D^{(n)}(X))$.
\end{definition}

We equip $K^{(n)}(X)$ with the following structure:
\begin{enumerate}
\item The metric structure: for $p=1$, define
\[
\rho^{(n)}(\Gamma-\Lambda,\Gamma'-\Lambda')
:=
W^{(n)}_{1}(\Gamma+\Lambda',\Gamma'+\Lambda),
\]
for $\Gamma,\Lambda,\Gamma',\Lambda'\in D^{(n)}(X)$. This is well defined and translation invariant, and $p=1$ is the unique exponent for which such an extension exists \cite{bubenik2022virtualpd}.

\item The norm structure: since $\rho^{(n)}$ is translation invariant, it induces a norm on $K^{(n)}(X)$ given by $ \|\Gamma\|_{\rho^{(n)}}:=\rho^{(n)}(\Gamma,0). $
\end{enumerate}

Elements of $(K^{(n)}(X),\rho^{(n)})$ are called $n$--th virtual persistence diagrams.

\begin{definition}\label{def:linearized-pd}
For $n\ge 1$, define the $n$--th linearized virtual persistence diagram space by $V^{(n)}(X):=K^{(n)}(X)\otimes_{\mathbb Z}\mathbb R$.
\end{definition}

We equip $V^{(n)}(X)$ with the following structure:
\begin{enumerate}
\item The Wasserstein norm (for $p=1$): for $\xi\in V^{(n)}(X)$, write $\xi=\sum_{i=1}^N c_i \bar u_i$ with $\bar u_i\in X^{(n)}/A^{(n)}$, set $\bar u_0:=[A^{(n)}]$ and $c_0:=-\sum_{i=1}^N c_i$, and define
\[
\|\xi\|_{W_1^{(n)}}
:=
\min\left\{
\sum_{i,j=0}^N \pi_{ij}\, d^{(n)}_1(\bar u_i,\bar u_j)
\;\middle|\;
\pi_{ij}\ge 0,\ 
\sum_{j=0}^N(\pi_{ij}-\pi_{ji})=c_i\ \text{for }0\le i\le N
\right\}.
\]
The induced metric is defined by $d_{W_1^{(n)}}(\xi,\eta):=\|\xi-\eta\|_{W_1^{(n)}}$.
\end{enumerate}

Elements of $(V^{(n)}(X),d_{W_1^{(n)}})$ are called $n$--th linearized virtual persistence diagrams.

\begin{remark}\label{rem:uniform-discreteness-levels}
If \((D^{(n)}(X),d^{(n)})\) is uniformly discrete, then \((D^{(n+1)}(X),d^{(n+1)})\) is uniformly discrete.
\end{remark}

The proof is deferred to Appendix~\ref{app:proof-uniform-discreteness-levels}.

\begin{remark}
If $(D^{(n)}(X),d^{(n)})$ is uniformly discrete, then for $p=1$, $K^{(n+1)}(X)$ is a discrete locally compact abelian group; this follows from Corollary~3.3 of \cite{fanningaktas2026banachrkhs}.
\end{remark}

\paragraph{Aggregation.}\label{par:aggregation}
For \(n\ge 1\), write \(\Gamma(u)\in\mathbb Z\) for the signed multiplicity of a level-\(n\) atom \(u\) in \(\Gamma\in K^{(n)}(X)\). Define \( \mathcal B_n:K^{(n)}(X)\times K^{(n)}(X)\to K^{(n+1)}(X) \) by
\[
\mathcal B_n(\Gamma,\Lambda)
:=
\sum_{\substack{
u\in\operatorname{supp}(\Gamma),\ v\in\operatorname{supp}(\Lambda)\\
u\preceq^{(n)}v
}}
\Gamma(u)\Lambda(v)\,[(u,v)],
\]
where \([(u,v)]\) denotes the class of \((u,v)\in X^{(n+1)}\) in \(D^{(n+1)}(X)=D(X^{(n+1)},A^{(n+1)})\). The map \(\mathcal B_n\) is biadditive and induces the homomorphism \(K^{(n)}(X)\otimes_{\mathbb Z}K^{(n)}(X)\to K^{(n+1)}(X)\). For \(\Gamma_1,\ldots,\Gamma_m\in K^{(n)}(X)\), define
\begin{enumerate}
\item The sum aggregate: \(K^{(n+1)}(X)\ni \mathcal A_n(\Gamma_1,\ldots,\Gamma_m):=\sum_{r=1}^m \mathcal B_n(\Gamma_r,\Gamma_r)\).
\item The mean aggregate: \(V^{(n+1)}(X)\ni \overline{\mathcal A}_n(\Gamma_1,\ldots,\Gamma_m):=m^{-1}\sum_{r=1}^m \mathcal B_n(\Gamma_r,\Gamma_r)\).
\end{enumerate}

\subsection{Theoretical guarantees}\label{subsec:theoretical-guarantees}

Character evaluation gives \(\chi_\theta(\mathcal B_n(\xi,\xi))\) as a quadratic phase in \(\xi\) with coefficients given by preorder indicators \(u_i\preceq^{(n)}u_j\) (Theorem~\ref{thm:quadratic-phase-aggregation}). For coboundary characters \(\chi_\psi\), this reduces to sums over principal ideals and filters, i.e., preorder zeta sums. Iteration yields a binary-tree expansion: the \(s\)-fold character phase is a homogeneous polynomial of degree \(2^s\), whose monomials appear exactly when all internal preorder constraints hold.

\paragraph{Trigonometric polynomials.}
For \(m\ge 1\), \(D^{(m)}(X)=D(X^{(m-1)},A^{(m-1)})\) is the quotient of the free commutative monoid on \(X^{(m-1)}\) by the submonoid supported on \(A^{(m-1)}\), so it is the free commutative monoid on the non-basepoint classes of \(X^{(m-1)}/A^{(m-1)}\). Consequently, \(K^{(m)}(X)\) is the free abelian group on these classes, and we fix a basis \(K^{(m)}(X)\cong \bigoplus_{i\in I_m}\mathbb Z e_i^{(m)}\). For \(i\in I_n\), let \(u_i\) be the corresponding non-basepoint level-\(n\) atom. For \(\theta\in\widehat{K^{(n+1)}(X)}\), set \(\chi_\theta([(u,v)])=\exp(i\theta_{[(u,v)]})\) for each non-basepoint class \([(u,v)]\) and \(\theta_{[A^{(n)}]}:=0\).

\begin{theorem}\label{thm:quadratic-phase-aggregation}
Assume that \((X^{(n)}/A^{(n)},d_1^{(n)})\) is uniformly discrete. Let $ \xi=\sum_{i\in I_n}\xi_i e_i^{(n)}\in K^{(n)}(X) $ have finite support.  Then for all $\theta \in \widehat{K^{(n+1)}(X)}$
\[
\chi_\theta\!\left(\mathcal B_n(\xi,\xi)\right)
=
\exp\left(
i
\sum_{i,j\in\operatorname{supp}(\xi)}
\mathbf 1_{\{u_i\preceq^{(n)}u_j\}}
\theta_{[(u_i,u_j)]}
\xi_i\xi_j
\right).
\]
\end{theorem}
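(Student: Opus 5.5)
The plan is to expand \(\mathcal B_n(\xi,\xi)\) in the fixed basis of \(K^{(n+1)}(X)\) and then use that \(\chi_\theta\) is a homomorphism into \(\mathbb T\). Since \(\xi\) has finite support, the definition of \(\mathcal B_n\) in Paragraph~\ref{par:aggregation} gives
\[
\mathcal B_n(\xi,\xi)=\sum_{\substack{i,j\in\operatorname{supp}(\xi)\\ u_i\preceq^{(n)}u_j}}\xi_i\xi_j\,[(u_i,u_j)],
\]
a finite \(\mathbb Z\)-linear combination of classes in \(D^{(n+1)}(X)\subseteq K^{(n+1)}(X)\). I will rewrite this as a sum over all pairs \((i,j)\in\operatorname{supp}(\xi)^2\), inserting the factor \(\mathbf 1_{\{u_i\preceq^{(n)}u_j\}}\). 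Here \(\operatorname{supp}(\xi)\) is read in the basis \(\{e^{(n)}_i\}\), so that \(\xi(u_i)=\xi_i\).

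Next I apply the character. Under the uniform discreteness hypothesis, \(K^{(n+1)}(X)\) is a discrete locally compact abelian group (the Remark following Remark~\ref{rem:uniform-discreteness-levels}), so \(\chi_\theta\) is just a group homomorphism \(K^{(n+1)}(X)\to\mathbb T\). Additivity on a finite sum gives
\[
\chi_\theta(\mathcal B_n(\xi,\xi))=\prod_{i,j}\chi_\theta([(u_i,u_j)])^{\mathbf 1_{\{u_i\preceq u_j\}}\xi_i\xi_j}.
\]
Substituting \(\chi_\theta([(u,v)])=\exp(i\theta_{[(u,v)]})\) and turning the product of exponentials into the exponential of a sum yields the stated formula.

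The only delicate step is the diagonal. Some pairs \((u_i,u_j)\) with \(u_i\preceq^{(n)}u_j\) may lie in \(A^{(n+1)}\), which happens when \(u_j\preceq^{(n)}u_i\) as well; this includes \(i=j\). Such a pair has class equal to the basepoint, which is \(0\) in \(D(X^{(n+1)},A^{(n+1)})\). The character must then take the value \(1\) there, so I need the convention that \(\theta\) vanishes on the basepoint class. The paper writes \(\theta_{[A^{(n)}]}:=0\), and I will read this as the basepoint of the relevant level. With that reading, those terms contribute \(0\) to the phase and are consistent with \(\chi_\theta(0)=1\).

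I will also check that distinct non-diagonal pairs correspond to basis elements of the free abelian group \(K^{(n+1)}(X)\), so that \(\theta_{[(u_i,u_j)]}\) is well defined. This follows from the freeness description preceding the theorem. Note that the identity holds as stated, with no reduction modulo \(2\pi\) needed, because both sides are exponentials. Beyond this bookkeeping, the proof is a direct computation.
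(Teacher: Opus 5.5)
Your proposal is correct and follows the paper's proof. Like the paper, you expand $\mathcal B_n(\xi,\xi)$ as a finite combination of classes $[(u_i,u_j)]$ weighted by $\mathbf 1_{\{u_i\preceq^{(n)}u_j\}}\xi_i\xi_j$, and then use the homomorphism property of $\chi_\theta$ with the convention that $\theta$ vanishes on the basepoint class; you are right that this basepoint should be read at level $n+1$, i.e.\ as $A^{(n+1)}$. As the paper also notes, the uniform discreteness hypothesis is not actually used, since every homomorphism $K^{(n+1)}(X)\to\mathbb T$ satisfies the identity, so your appeal to discreteness is harmless but unnecessary.
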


The proof is deferred to Appendix~\ref{app:proof-quadratic-phase-aggregation}. The identity does not require the uniform-discreteness hypothesis: it holds for every homomorphism \(\chi:K^{(n+1)}(X)\to\mathbb T\), since the proof uses only the homomorphism property. If \((X^{(n)}/A^{(n)},d_1^{(n)})\) is uniformly discrete, then \(K^{(n+1)}(X)\) is discrete, so every homomorphism is continuous and \(\widehat{K^{(n+1)}(X)}=\operatorname{Hom}(K^{(n+1)}(X),\mathbb T)\); thus the hypothesis is needed only to interpret the formula via Pontryagin duality. For subsequent use, fix \(\psi:X^{(n)}/A^{(n)}\to\mathbb R/2\pi\mathbb Z\) with \(\psi([A^{(n)}])=0\), and define \(\chi_\psi([(u,v)])=\exp(i(\psi(v)-\psi(u)))\). This is well defined on \(A^{(n)}\) and extends uniquely to a homomorphism \(\chi_\psi:K^{(n+1)}(X)\to\mathbb T\).

\begin{corollary}\label{cor:character-sum}
Let \(\xi=\sum_{i\in I_n}\xi_i e_i^{(n)}\in K^{(n)}(X)\) have finite support, and let \(\chi_\psi\) be defined as above. Then $\chi_\psi(\mathcal B_n(\xi,\xi)) =$
\[
\exp\Biggl(
i\Biggl[
\sum_{v\in\operatorname{supp}(\xi)}
\psi(v)\xi_v
\sum_{\substack{u\in\operatorname{supp}(\xi)\\ u\preceq^{(n)}v}}
\xi_u
-
\sum_{u\in\operatorname{supp}(\xi)}
\psi(u)\xi_u
\sum_{\substack{v\in\operatorname{supp}(\xi)\\ u\preceq^{(n)}v}}
\xi_v
\Biggr]
\Biggr).
\]
\end{corollary}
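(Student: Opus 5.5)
We obtain Corollary~\ref{cor:character-sum} by specializing Theorem~\ref{thm:quadratic-phase-aggregation} to the character \(\chi_\psi\) and then regrouping the double sum. As noted after the theorem, its identity uses only the homomorphism property of \(\chi:K^{(n+1)}(X)\to\mathbb T\). So it applies to \(\chi_\psi\) with no uniform-discreteness hypothesis.

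The first step is to identify the phase coefficients of \(\chi_\psi\). We set \(\theta_{[(u,v)]}:=\psi(v)-\psi(u)\in\mathbb R/2\pi\mathbb Z\) for every non-basepoint class \([(u,v)]\), and check that this is compatible with the convention \(\theta_{[A^{(n+1)}]}=0\).

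For the diagonal classes, note that each pair that actually contributes in \(\mathcal B_n(\xi,\xi)\) satisfies \(u\preceq^{(n)}v\). Such a pair lies in \(A^{(n+1)}\) exactly when also \(v\preceq^{(n)}u\). For these pairs we need \(\psi(v)=\psi(u)\), which is the content of the remark that \(\chi_\psi\) is ``well defined on \(A^{(n)}\).'' This is the one point needing care, and it is where I expect the main obstacle. The way I would handle it is to take the statement's assertion that \(\chi_\psi\) extends uniquely to a homomorphism as given. Then every pair \((u,v)\), diagonal or not, contributes the phase \(\exp(i(\psi(v)-\psi(u)))\) raised to the power \(\xi_u\xi_v\). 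Alternatively, one can apply biadditivity of \(\mathcal B_n\) and the homomorphism property directly, so that \(\chi_\psi(\mathcal B_n(\xi,\xi))=\prod_{u\preceq v}\chi_\psi([(u,v)])^{\xi_u\xi_v}\).

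Substituting into Theorem~\ref{thm:quadratic-phase-aggregation} gives the exponent
\[
\sum_{u,v\in\operatorname{supp}(\xi)}\mathbf 1_{\{u\preceq^{(n)}v\}}\bigl(\psi(v)-\psi(u)\bigr)\xi_u\xi_v .
\]
Here we index atoms directly by \(u,v\) instead of by \(i,j\). Since the support is finite, we can split this into two finite sums and apply Fubini to each. In the \(\psi(v)\) term we fix \(v\) and sum over \(u\preceq^{(n)}v\), which gives \(\sum_v\psi(v)\xi_v\sum_{u\preceq v}\xi_u\). In the \(\psi(u)\) term we fix \(u\) and sum over \(v\) with \(u\preceq^{(n)}v\), which gives \(\sum_u\psi(u)\xi_u\sum_{v\succeq u}\xi_v\).

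Finally, all arithmetic takes place in \(\mathbb R/2\pi\mathbb Z\) with integer coefficients \(\xi_u\xi_v\), so the expressions are well defined. Exponentiating the regrouped exponent yields the displayed formula.
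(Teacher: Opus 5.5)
Your proposal is correct and is essentially the paper's proof: you specialize Theorem~\ref{thm:quadratic-phase-aggregation} with \(\theta_{[(u,v)]}=\psi(v)-\psi(u)\), split the finite double sum into its \(\psi(v)\) and \(\psi(u)\) parts, and regroup each part by fixing the outer atom. The diagonal issue you flag, which the paper passes over, does not actually require \(\psi(u)=\psi(v)\) on equivalent atoms. Among pairs with \(u\preceq^{(n)}v\) and \(v\preceq^{(n)}u\), the terms \((\psi(v)-\psi(u))\xi_u\xi_v\) for \((u,v)\) and for \((v,u)\) cancel exactly, and \(u=v\) contributes \(0\), so the formula holds whether those classes receive phase \(\psi(v)-\psi(u)\) or \(0\).
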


The proof is deferred to Appendix~\ref{app:proof-character-sum}.

\paragraph{Iterated aggregation.}
Fix \(\xi=\sum_{i\in I_\xi}\xi_i e_i^{(n)}\in K^{(n)}(X)\), where \(I_\xi:=\{i\in I_n:\xi_i\ne 0\}\). For each \(i\in I_\xi\), let \(u_i^{(n)}\in X^{(n)}\) be a representative of the corresponding non-basepoint class, and view \(u_i^{(n)}\) as the corresponding singleton diagram in \(D^{(n)}(X)\) when used as an endpoint. For \(s\ge 1\), set \(\Xi_0:=\xi\) and \(\Xi_{r+1}:=\mathcal B_{n+r}(\Xi_r,\Xi_r)\) for \(0\le r<s\). Let \(\{0,1\}^s\) be the leaves and \(\{0,1\}^{<s}\) the internal vertices of the complete binary tree of depth \(s\), with root \(\varnothing\) and children \(\omega0,\omega1\), and set \(\ell(\omega):=n+s-|\omega|\). For each \(\lambda:\{0,1\}^s\to I_\xi\), define \(U_\lambda(\varepsilon):=u_{\lambda(\varepsilon)}^{(n)}\) for \(\varepsilon\in\{0,1\}^s\), and recursively define \(U_\lambda(\omega):=(U_\lambda(\omega0),U_\lambda(\omega1))\in X^{(\ell(\omega))}\) for \(\omega\in\{0,1\}^{<s}\).

\begin{lemma}\label{lem:iterated-aggregation-expansion}
For every \(s\ge 1\), $\Xi_s=$
\[
\sum_{\lambda:\{0,1\}^s\to I_\xi}
\left(\prod_{\varepsilon\in\{0,1\}^s}\xi_{\lambda(\varepsilon)}\right)
\left(\prod_{\omega\in\{0,1\}^{<s}}
\mathbf 1_{\{U_\lambda(\omega0)\preceq^{(\ell(\omega)-1)}U_\lambda(\omega1)\}}\right)
[U_\lambda(\varnothing)].
\]
\end{lemma}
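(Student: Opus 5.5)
We argue by induction on \(s\), using only the biadditivity of \(\mathcal B_m\) and its definition on atoms. The natural induction splits the tree at the root rather than at the leaves. A map \(\lambda:\{0,1\}^{s+1}\to I_\xi\) is the same as a pair \((\lambda_0,\lambda_1)\) of maps \(\{0,1\}^s\to I_\xi\), via \(\lambda_b(\varepsilon):=\lambda(b\varepsilon\)). Under this identification, \(U_\lambda(b\omega)=U_{\lambda_b}(\omega)\), where the right-hand side is computed in the depth-\(s\) tree; the levels agree because \(\ell_{s+1}(b\omega)=n+s+1-(|\omega|+1)=\ell_s(\omega)\). Consequently \(U_\lambda(\varnothing)=(U_{\lambda_0}(\varnothing),U_{\lambda_1}(\varnothing))\). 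The product of leaf weights factors as \(\prod\xi_{\lambda_0}\cdot\prod\xi_{\lambda_1}\). The internal-vertex indicator factors as the indicator products for \(\lambda_0\) and \(\lambda_1\), times the new root indicator \(\mathbf 1_{\{U_{\lambda_0}(\varnothing)\preceq^{(n+s)}U_{\lambda_1}(\varnothing)\}}\).

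\emph{Base case \(s=1\).} Here \(\Xi_1=\mathcal B_n(\xi,\xi)=\sum_{i,j\in I_\xi}\xi_i\xi_j\mathbf 1_{\{u_i\preceq^{(n)}u_j\}}[(u_i,u_j)]\), directly from the definition with \(\Gamma=\Lambda=\xi\). This matches the formula with \(\lambda(0)=i\) and \(\lambda(1)=j\).

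\emph{Inductive step.} Assume the formula for \(\Xi_s\). Write \(\Xi_s=\sum_{\mu}c_\mu[U_\mu(\varnothing)]\), where \(c_\mu\) is the leaf-weight product times the indicator product. By biadditivity,
\[\Xi_{s+1}=\mathcal B_{n+s}(\Xi_s,\Xi_s)=\sum_{\lambda_0,\lambda_1}c_{\lambda_0}c_{\lambda_1}\,\mathcal B_{n+s}\bigl([U_{\lambda_0}(\varnothing)],[U_{\lambda_1}(\varnothing)]\bigr).\]
For single atoms, \(\mathcal B_{n+s}([u],[v])=\mathbf 1_{\{u\preceq^{(n+s)}v\}}[(u,v)]\). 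Substituting this and applying the factorization above gives exactly the claimed expression for \(s+1\).

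The main obstacle is justifying the formal manipulation, because the sum for \(\Xi_s\) is an expansion in which different \(\mu\) may give the same class. Some terms \(U_\mu(\varnothing)\) may also lie in the diagonal \(A^{(n+s)}\), in which case their class is zero, or be identified with other terms. The definition of \(\mathcal B_{n+s}\) is phrased through \(\operatorname{supp}\) and multiplicities \(\Gamma(u)\), not through an arbitrary expansion. We handle this by first noting that \(\mathcal B_m\) is biadditive, as the paper states, so \(\mathcal B_m(\sum_a c_a[x_a],\sum_b d_b[y_b])=\sum_{a,b}c_ad_b\mathcal B_m([x_a],[y_b])\) for any finite formal expansion, repeated or not. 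We then check the single-atom rule when an atom is the zero class.

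There is one subtlety here. If \(x\in A^{(m)}\), then \([x]=0\), so \(\mathcal B_m([x],\cdot)=0\). The formula, however, would assign the term \(\mathbf 1[(x,y)]\), which need not vanish. So we must interpret \([U_\lambda(\varnothing)]\), and the endpoints of the pairs, as the non-basepoint representatives fixed earlier. Concretely, we read \(U_\lambda(\omega)\) as an element of \(D^{(\ell(\omega))}(X)\), where diagonal atoms are already zero, and treat terms whose internal nodes are diagonal as zero singleton diagrams. The identity is then an identity in \(K^{(n+s)}(X)\), read with this convention. We will state this convention explicitly and verify that biadditivity makes the induction go through under it.
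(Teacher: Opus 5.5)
Your induction is the same as the paper's proof: split the labeling at the root into $(\lambda_L,\lambda_R)$, expand $\mathcal B_{n+s}(\Xi_s,\Xi_s)$ by biadditivity, read the new root indicator off the single-atom rule, and take the subtree indicators from the hypothesis.

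The diagonal issue you raise is genuine, and the paper's proof does not address it. It applies $\mathcal B_{n+s}([x],[y])=\mathbf 1_{\{x\preceq^{(n+s)}y\}}[(x,y)]$ to every pair of subtree roots, including diagonal ones, where the left side is actually $0$. This already occurs at $s=2$ by reflexivity. Take a labeling with $\lambda(00)=\lambda(01)=i$, $\lambda(10)=j$, $\lambda(11)=k$. Then $U_\lambda(0)=(u_i,u_i)\in A^{(n+1)}$, so this labeling contributes nothing to $\mathcal B_{n+1}(\Xi_1,\Xi_1)$. The displayed formula, however, retains the summand $\xi_i^2\xi_j\xi_k\,[(0,[(u_j,u_k)])]$ whenever $u_j\preceq^{(n)}u_i\preceq^{(n)}u_k$. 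For positive $\xi$ such contributions do not cancel, and the class is generally nonzero. So the statement really does need a convention.

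State that convention in the form that actually works. The \emph{whole summand} for $\lambda$ must vanish whenever some non-root internal vertex $U_\lambda(\omega)$, $\omega\neq\varnothing$, lies in $A^{(\ell(\omega))}$. Equivalently, insert the factors $\mathbf 1_{\{U_\lambda(\omega)\notin A^{(\ell(\omega))}\}}$ for those $\omega$. Your ``concretely'' sentence can be read as only replacing a diagonal vertex by the zero diagram of $D^{(\ell(\omega))}(X)$. That does not suffice, because its parent $(0,y)$ is typically not in the diagonal.

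With the correct atom rule
$\mathcal B_m([x],[y])=\mathbf 1_{\{x\notin A^{(m)}\}}\mathbf 1_{\{y\notin A^{(m)}\}}\mathbf 1_{\{x\preceq^{(m)}y\}}[(x,y)]$,
your inductive step produces exactly the new factors at the vertices $0$ and $1$. The other factors come from the hypothesis. The base case is unaffected, since it has no non-root internal vertices and its leaves $u_i^{(n)}$ are non-basepoint. The root needs no factor, because $[U_\lambda(\varnothing)]=0$ anyway when it is diagonal.
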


The proof is deferred to Appendix~\ref{app:proof-iterated-aggregation-expansion}.

\begin{corollary}\label{cor:iterated-character-phase}
Let \(\theta\in\widehat{K^{(n+s)}(X)}\) satisfy \(\chi_\theta([a])=\exp(i\theta_{[a]})\) for each non-basepoint class \([a]\), with \(\theta_0=0\). Then $ \chi_\theta(\Xi_s) = $
\[
\exp\Biggl(
i
\sum_{\lambda:\{0,1\}^s\to I_\xi}
\left(\prod_{\omega\in\{0,1\}^{<s}}
\mathbf 1_{\{U_\lambda(\omega0)\preceq^{(\ell(\omega)-1)}U_\lambda(\omega1)\}}\right) 
\theta_{[U_\lambda(\varnothing)]}
\prod_{\varepsilon\in\{0,1\}^s}\xi_{\lambda(\varepsilon)}
\Biggr).
\]
\end{corollary}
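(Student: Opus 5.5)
The plan is to apply the character \(\chi_\theta\) to the expansion of \(\Xi_s\) from Lemma~\ref{lem:iterated-aggregation-expansion}. We use only that \(\chi_\theta\) is a homomorphism \(K^{(n+s)}(X)\to\mathbb T\). By the lemma, \(\Xi_s\) is a finite \(\mathbb Z\)-linear combination of classes \([U_\lambda(\varnothing)]\), with integer coefficient
\[
c_\lambda:=\Bigl(\prod_{\varepsilon\in\{0,1\}^s}\xi_{\lambda(\varepsilon)}\Bigr)\Bigl(\prod_{\omega\in\{0,1\}^{<s}}\mathbf 1_{\{U_\lambda(\omega0)\preceq^{(\ell(\omega)-1)}U_\lambda(\omega1)\}}\Bigr).
\]
The sum is finite because \(I_\xi\) is finite, so \(\{0,1\}^s\to I_\xi\) has finitely many maps. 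Multiplicativity of \(\chi_\theta\) then gives \(\chi_\theta(\Xi_s)=\prod_\lambda\chi_\theta([U_\lambda(\varnothing)])^{c_\lambda}\).

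Next we evaluate each factor. Since \(U_\lambda(\varnothing)=(U_\lambda(0),U_\lambda(1))\in X^{(n+s)}\), its class in \(D^{(n+s)}(X)=D(X^{(n+s)},A^{(n+s)})\) is either a non-basepoint class or the basepoint. In the first case the hypothesis gives \(\chi_\theta([U_\lambda(\varnothing)])=\exp(i\theta_{[U_\lambda(\varnothing)]})\). In the second case the class is \(0\) in \(K^{(n+s)}(X)\), so the factor is \(1=\exp(i\cdot 0)\), consistent with the convention \(\theta_0=0\). Here \(\theta_0\) is read as the coefficient of the basepoint class \([A^{(n+s)}]\), matching the convention \(\theta_{[A^{(n)}]}:=0\) used before Theorem~\ref{thm:quadratic-phase-aggregation}. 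In both cases \(\chi_\theta([U_\lambda(\varnothing)])^{c_\lambda}=\exp(i\,c_\lambda\theta_{[U_\lambda(\varnothing)]})\), because \(c_\lambda\in\mathbb Z\) and \(\exp\) turns integer multiples into powers.

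Multiplying the finitely many exponentials then turns the product into the exponential of the sum \(\sum_\lambda c_\lambda\theta_{[U_\lambda(\varnothing)]}\). Reordering the scalar factors inside each term gives exactly the displayed formula. This is the same argument as Theorem~\ref{thm:quadratic-phase-aggregation}, which is the case \(s=1\).

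The only delicate point is bookkeeping rather than analysis. The phases \(\theta_{[a]}\) live in \(\mathbb R/2\pi\mathbb Z\), or are chosen real representatives. The identity \(\exp(i c\theta)=\exp(i\theta)^c\) for integer \(c\) holds for either reading, so there is no ambiguity. We also need distinct labelings \(\lambda\) that produce the same class \([U_\lambda(\varnothing)]\) to be handled correctly. They are, because we never merge them before applying the character: additivity of the exponent absorbs repeated classes automatically. No uniform-discreteness hypothesis is needed, exactly as remarked after Theorem~\ref{thm:quadratic-phase-aggregation}.
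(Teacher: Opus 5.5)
Your proposal is correct and takes the same route as the paper. Like the paper, you apply the homomorphism \(\chi_\theta\) to the finite integer expansion of \(\Xi_s\) from Lemma~\ref{lem:iterated-aggregation-expansion}, so the sum becomes a product of phases with the coefficients as exponents; your added bookkeeping (integrality of \(c_\lambda\), and the basepoint case absorbed by \(\theta_0=0\)) just makes explicit what the paper's one-line argument leaves implicit.
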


\begin{proof}
Apply \(\chi_\theta\) to Lemma~\ref{lem:iterated-aggregation-expansion}. Since \(\chi_\theta\) is a homomorphism, sums become products and coefficients appear as exponents, which gives the stated formula.
\end{proof}

\subsection{Our algorithm}\label{subsec:algorithm}

Let \(\xi=\sum_{u\in\operatorname{supp}(\xi)} \xi_u\,u \in K^{(n)}(X)\) have finite support. Let \(N\) denote the maximum recursive depth of the decomposition of atoms in \(\operatorname{supp}(\xi)\), and let \(c\) denote the number of connected components in the decomposition forest on \(\operatorname{supp}(\xi)\) obtained by recursively expanding each atom into its defining binary decomposition. Each component embeds injectively in a full binary tree of depth \(N\), so it contains at most \(2^{N+1}-1\) nodes (counting all recursive nodes), and hence \(|\operatorname{supp}(\xi)| \le c(2^{N+1}-1)\). We use a unit-cost RAM model: arithmetic operations, hash lookups, phase evaluations, and preorder comparisons have \(O(1)\) amortized cost, and sorting \(m\) tuples costs \(O(m\log m)\).

\begin{figure}[t]
\centering
\begin{minipage}{0.48\columnwidth}
\begin{algorithm}[H]
\caption{Na\"{i}ve aggregation}
\label{alg:explicit-aggregation}
\begin{algorithmic}[1]
\State Initialize an empty coefficient map \(C\)
\For{each ordered pair \((u,v)\in\operatorname{supp}(\xi)^2\)}
    \If{\(u\preceq^{(n)}v\)}
        \State \(C[[(u,v)]]\gets C[[(u,v)]]+\xi_u\xi_v\)
    \EndIf
\EndFor
\State \Return \(\sum_a C[a]\,a\in K^{(n+1)}(X)\)
\end{algorithmic}
\end{algorithm}
\end{minipage}
\hfill
\begin{minipage}{0.48\columnwidth}
\begin{algorithm}[H]
\caption{Harmonic evaluation}
\label{alg:coboundary-harmonic}
\begin{algorithmic}[1]
\State Compute \(Z^-(v)=\sum_{u\preceq^{(n)}v}\xi_u\) for all \(v\)
\State Compute \(Z^+(u)=\sum_{v:u\preceq^{(n)}v}\xi_v\) for all \(u\)
\State $S
\begin{aligned}[t]
\gets &\sum_{v}\psi(v)\xi_v Z^-(v)\\
- &\sum_{u}\psi(u)\xi_u Z^+(u)
\end{aligned}$
\State \Return \(\exp(iS)\)
\end{algorithmic}
\end{algorithm}
\end{minipage}
\caption{Algorithm~\ref{alg:explicit-aggregation} explicitly constructs the aggregate, whereas Algorithm~\ref{alg:coboundary-harmonic} computes a scalar evaluation of the aggregate without materializing it.}
\label{fig:aggregation-algorithms}
\end{figure}

\begin{theorem}\label{thm:explicit-aggregation-complexity}
Under the computational model above, Algorithm~\ref{alg:explicit-aggregation} runs in time \(O(c^2 4^N)\).
\end{theorem}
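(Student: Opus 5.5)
The plan is a direct operation count for Algorithm~\ref{alg:explicit-aggregation}, followed by the support-size bound stated just before the algorithm. Write \(M:=|\operatorname{supp}(\xi)|\). We first show that the algorithm runs in \(O(M^2)\) time in the unit-cost RAM model. We then substitute \(M\le c(2^{N+1}-1)\).

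\textbf{Step 1 (per-pair cost).} Each iteration of the loop does four things: a preorder comparison \(u\preceq^{(n)}v\), possibly the formation of the key \([(u,v)]\), one hash lookup, and one multiply--add \(\xi_u\xi_v\). Under the stated model, comparisons, hash lookups and arithmetic each cost \(O(1)\) amortized.

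The only point needing care is forming the class \([(u,v)]\). We do not copy the recursive structures of \(u\) and \(v\). Instead we represent the new atom by the ordered pair of identifiers (hash keys) of \(u\) and \(v\), which has constant size, so forming it costs \(O(1)\). The model charges \(O(1)\) for the preorder test regardless of the recursive depth \(N\). One could alternatively evaluate the test through the embedding maps \(\phi^{(n)}_k\); we do not rely on this, since the theorem is stated under the unit-cost model. Initialization of \(C\) costs \(O(1)\).

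\textbf{Step 2 (loop and output).} The loop runs over all ordered pairs in \(\operatorname{supp}(\xi)^2\), so it has exactly \(M^2\) iterations and total cost \(O(M^2)\). The map \(C\) receives at most one new key per iteration, so it has at most \(M^2\) entries. Returning \(\sum_a C[a]\,a\) is therefore a traversal of \(C\) costing \(O(M^2)\). Discarding zero coefficients, and noting that no key lies in \(A^{(n+1)}\) beyond what the quotient identifies, does not change this bound.

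\textbf{Step 3 (substitution).} Using \(M\le c(2^{N+1}-1)\le 2c\,2^N\), we get
\[
M^2\le 4c^2 4^N,
\]
so the running time is \(O(c^2 4^N)\).

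\textbf{Main obstacle.} The step most open to objection is the claim in Step 1 that forming the key and testing \(u\preceq^{(n)}v\) cost \(O(1)\) rather than \(O(2^N)\) for deeply nested atoms. I would handle this by appealing explicitly to the unit-cost assumption: atoms are identified by hash keys, and higher-order atoms are stored as pairs of child keys, as in the decomposition forest. Under this convention the cost is independent of \(N\), and the final bound follows.
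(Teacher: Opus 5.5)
Your proposal is correct and matches the paper's proof. The paper bounds the number of ordered pairs by \(|\operatorname{supp}(\xi)|^2\le (c(2^{N+1}-1))^2=O(c^2 4^N)\), charges \(O(1)\) per iteration under the unit-cost model, and notes that the coefficient map has at most that many entries. Your extra remark on representing \([(u,v)]\) by a pair of child keys just makes explicit what the unit-cost assumption already grants.
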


The proof is deferred to Appendix~\ref{app:proof-explicit-aggregation-complexity}.

\begin{theorem}\label{thm:coboundary-harmonic-complexity}
Assume that the preorder induced on \(\operatorname{supp}(\xi)\) has order dimension at most \(r\). Then Algorithm~\ref{alg:coboundary-harmonic} computes \(\chi_\psi(\mathcal B_n(\xi,\xi))\) in time \( O\!\left(c2^N \log^{r-1}(c2^N)\right). \)
\end{theorem}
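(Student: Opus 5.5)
Correctness and running time are treated separately, and the bound is reduced to a standard dominance-counting problem.

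For correctness, Lines~1--2 of Algorithm~\ref{alg:coboundary-harmonic} compute the inner sums of Corollary~\ref{cor:character-sum}:
\[
Z^-(v)=\sum_{u\in\operatorname{supp}(\xi),\,u\preceq^{(n)}v}\xi_u,\qquad Z^+(u)=\sum_{v\in\operatorname{supp}(\xi),\,u\preceq^{(n)}v}\xi_v .
\]
Line~3 then forms exactly the bracketed phase $S$ of that corollary, so $\exp(iS)=\chi_\psi(\mathcal B_n(\xi,\xi))$. Nothing further is needed here beyond quoting Corollary~\ref{cor:character-sum}.

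For the running time, set $m:=|\operatorname{supp}(\xi)|\le c(2^{N+1}-1)=O(c2^N)$, using the support bound established before Algorithm~\ref{alg:explicit-aggregation}. Line~3 and Line~4 cost $O(m)$ under the unit-cost model. The main work is Lines~1--2.

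By hypothesis the induced preorder has order dimension at most $r$. I would represent each atom $u$ by the coordinate vector $(\phi_1(u),\ldots,\phi_r(u))\in\mathbb R^r$, taken from realizers as in Assumption~\ref{ass:base-structure}. Computing these coordinates takes $O(m)$ under the model, or is charged to preprocessing. Then $u\preceq v$ holds if and only if $\phi(u)\le\phi(v)$ componentwise. Hence $Z^-(v)$ is a weighted dominance sum: the total weight of points dominated by $\phi(v)$ in $\mathbb R^r$. Likewise $Z^+(u)$ is the same problem after negating all coordinates. Ties in coordinates must be handled by using non-strict comparisons consistently, so that equivalent atoms, for which both $u\preceq v$ and $v\preceq u$ hold, are counted. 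This is the zeta transform over principal ideals and filters.

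I would compute all $m$ dominance sums offline by the standard multidimensional divide-and-conquer (Bentley's dominance algorithm, as in \cite{pegolotti2022fast}). First sort by $\phi_1$ in $O(m\log m)$. Then recurse: split at the median of the current coordinate, and let points of the lower half contribute to points of the upper half via an $(r-1)$-dimensional subproblem. This gives $T_r(m)=2T_r(m/2)+T_{r-1}(m)+O(m)$. The base case $r=1$ is a prefix sum after sorting; for $r=2$ one gets $O(m\log m)$ with a Fenwick tree. The recurrence solves to $O(m\log^{r-1}m)$. I would state this as the known bound and check the recurrence briefly by induction.

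I expect the main obstacle to be the accounting at $r=1$. There, after an $O(m\log m)$ sort, the sort dominates the claimed $O(m)$. I would handle this either by reading $\log^{r-1}$ as absorbing one sort for $r\ge2$, or by noting that the coordinates are given presorted. I would also have to check that ties and the basepoint (with $\psi=0$) do not break the recursion. Substituting $m=O(c2^N)$ into $O(m\log^{r-1}m)$, and using that $m\log^{r-1}m$ is monotone in $m$, yields $O(c2^N\log^{r-1}(c2^N))$.
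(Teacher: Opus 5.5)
Your proposal is correct and follows the paper's proof essentially step for step: correctness comes from Corollary~\ref{cor:character-sum}, $Z^-$ and $Z^+$ become weighted dominance sums through the realizer coordinates, a Bentley-type divide and conquer with $T_r(m)=2T_r(m/2)+T_{r-1}(m)+O(m)$ gives $O(m\log^{r-1}m)$, and you then substitute $m\le c(2^{N+1}-1)$. Your bookkeeping is in fact cleaner than the paper's, and your flagged issues are real: the paper's stated base case $T_1(n)=O(n\log n)$ would make that recurrence yield $\log^{r}$, while your linear prefix-sum base case (equivalently $O(m\log m)$ at $r=2$) gives the claimed $\log^{r-1}$; your $r=1$ caveat is a defect of the statement, not of your argument, since the sort costs $O(m\log m)$ under the stated cost model and presorted coordinates must be assumed there, as the paper's own remark ``linear after sorting'' tacitly concedes; and the ties you raise, which the paper ignores, are handled in the standard way by merging atoms with identical coordinate vectors and breaking ties lexicographically when splitting by position.
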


The proof is deferred to Appendix~\ref{app:proof-coboundary-harmonic-complexity}.

When \(\preceq^{(n)}\) has order dimension \(r\) and admits an \(r\)-dimensional coordinate representation, these transforms reduce to dominance-sum queries and achieve the bound of Theorem~\ref{thm:coboundary-harmonic-complexity}; without such a representation, the pairwise method of Theorem~\ref{thm:explicit-aggregation-complexity} remains the fallback. For \(r=1\), the preorder is a total preorder, and the transforms are prefix sums, so the transform step is linear after sorting.

\section{Experiments}\label{sec:experiments}

\begin{figure}[t]
\centering
\includegraphics[width=\linewidth]{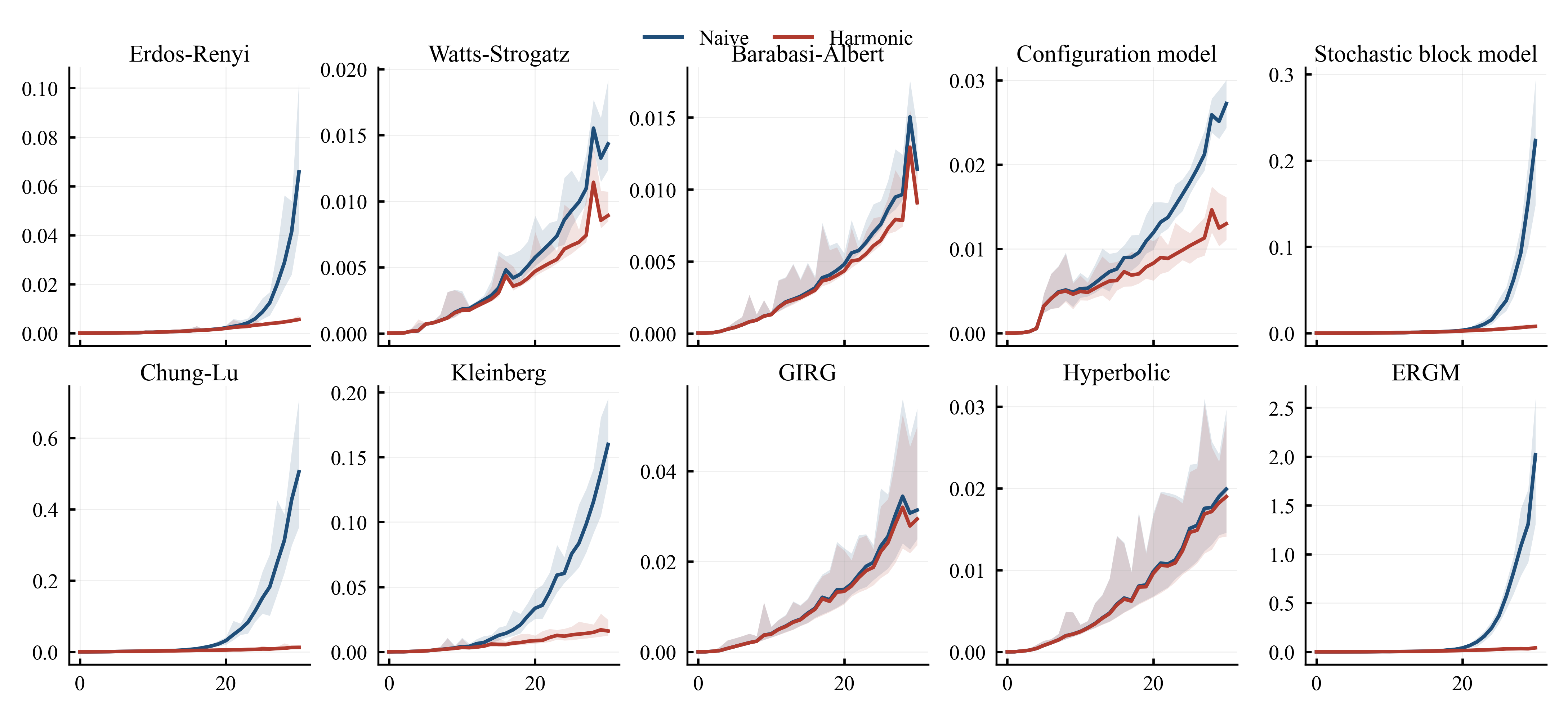}
\caption{Aggregation runtime for \(\overline{\mathcal A}_1(D(G_1),\ldots,D(G_m))\) across ten graph models (\(m=30\)). Runtime measures aggregation only. Plots show runtime vs.\ \(N\) for na\"{i}ve and harmonic methods, with shaded bands showing the range over 30 runs.}
\label{fig:runtime}
\end{figure}

\begin{table}[t]
\centering
\setlength{\tabcolsep}{3pt}
\renewcommand{\arraystretch}{1.05}
\caption{Aggregation speedups across graph models ($m=30$; na\"{i}ve vs.\ harmonic mean). Entries are multiplicative speedups ($\times$). The matrix is symmetric, so only the upper triangle is shown. CM = Configuration Model, SBM = Stochastic Block Model, KSW = Kleinberg Small-World, and HRG = Hyperbolic Random Graph.}
\label{tab:aggregation_speedup}
\begin{tabular}{lcccccccccc}
\toprule
 & \rotatebox{90}{Erd\H{o}s--R\'enyi} 
 & \rotatebox{90}{Watts--Strogatz}
 & \rotatebox{90}{Barab\'asi--Albert}
 & \rotatebox{90}{CM}
 & \rotatebox{90}{SBM}
 & \rotatebox{90}{Chung--Lu}
 & \rotatebox{90}{KSW}
 & \rotatebox{90}{GIRG}
 & \rotatebox{90}{HRG}
 & \rotatebox{90}{ERGM} \\
\midrule
Erd\H{o}s--R\'enyi 
& -- 
& 203.1 
& 239.2 
& 278.4 
& 309.4 
& 266.5 
& 237.2 
& 160.4 
& 145.5 
& 277.9 \\

Watts--Strogatz 
&  
& -- 
& 59.8 
& 48.4 
& 179.7 
& 91.0 
& 76.0 
& 28.6 
& 23.8 
& 171.6 \\

Barab\'asi--Albert 
&  
&  
& -- 
& 82.1 
& 221.9 
& 160.1 
& 83.7 
& 39.1 
& 32.8 
& 193.9 \\

CM 
&  
&  
&  
& -- 
& 255.4 
& 188.0 
& 92.6 
& 51.1 
& 44.0 
& 214.3 \\

SBM 
&  
&  
&  
&  
& -- 
& 264.2 
& 249.3 
& 121.3 
& 121.0 
& 268.2 \\

Chung--Lu 
&  
&  
&  
&  
&  
& -- 
& 166.3 
& 75.7 
& 76.2 
& 209.7 \\

KSW 
&  
&  
&  
&  
&  
&  
& -- 
& 60.2 
& 55.2 
& 218.2 \\

GIRG 
&  
&  
&  
&  
&  
&  
&  
& -- 
& 11.8 
& 123.8 \\

HRG 
&  
&  
&  
&  
&  
&  
&  
&  
& -- 
& 120.9 \\

ERGM 
&  
&  
&  
&  
&  
&  
&  
&  
&  
& -- \\
\bottomrule
\end{tabular}
\end{table}

For each model pair in Table~\ref{tab:aggregation_speedup}, we form paired samples \((G_k,H_k)_{k=1}^{30}\) and compute \(\overline{\mathcal A}_1\!\bigl(D(G_1)-D(H_1),\ldots,D(G_{30})-D(H_{30})\bigr)\). Figure~\ref{fig:runtime} shows aggregation-only runtime versus \(N\) for the na\"{i}ve and harmonic methods. Table~\ref{tab:aggregation_speedup} shows model-pair speedups of harmonic over na\"{i}ve aggregation. Appendix~\ref{app:higher-order-wasserstein} gives the complementary higher-order Wasserstein algorithms. Appendix~\ref{sec:gnn} gives the GCN training-drift experiment. Additional implementation details are given in Appendix~\ref{app:implementation-details}.

\section{Conclusion}

Higher-order persistence diagrams extend persistence recursively by forming preorder-compatible pairs of atoms, encoding degeneracy through diagonal classes, and propagating Wasserstein transport across levels. The aggregation operator \(\mathcal B_n\) forms preorder-constrained pairs; each iteration expands support through admissible ordered pairs and produces combinatorial growth determined by the preorder. Character evaluation of \(\mathcal B_n(\xi,\xi)\) gives a quadratic phase with coefficients defined by preorder indicators, and coboundary characters rewrite this expression as preorder zeta sums over principal ideals and filters. Under finite order dimension, these zeta sums evaluate \(\chi_\psi(\mathcal B_n(\xi,\xi))\) without constructing higher-order diagrams and reduce the cost from \(O(c^2 4^N)\) to \(O(c2^N\log^{r-1}(c2^N))\). Experiments on random graph models measure aggregation-only runtime by comparing explicit construction with zeta-transform evaluation. Limitations include reliance on uniform discreteness to realize harmonic representations via Pontryagin duality and the combinatorial growth of higher-order diagrams, which makes explicit Wasserstein computation intractable at high levels. Future work will address topological structures induced by the preorder, multiscale behavior across recursive levels, and analytic properties of aggregation through Lipschitz functionals and amenability.

\bibliographystyle{plainnat}
\bibliography{references}

@article{bubenik2022virtualpd,
  author  = {Bubenik, Peter and Elchesen, Alex},
  title   = {Virtual persistence diagrams, signed measures, Wasserstein distances, and Banach spaces},
  journal = {Journal of Applied and Computational Topology},
  year    = {2022},
  volume  = {6},
  pages   = {429--474},
  doi     = {10.1007/s41468-022-00091-9}
}

@article{bubenik2022universality,
  author  = {Bubenik, Peter and Elchesen, Alex},
  title   = {Universality of persistence diagrams and the bottleneck and Wasserstein distances},
  journal = {Computational Geometry},
  year    = {2022},
  volume  = {105},
  pages   = {101882}
}

@article{bubenik2012landscapes,
  author  = {Bubenik, Peter},
  title   = {Statistical Topological Data Analysis Using Persistence Landscapes},
  journal = {Journal of Machine Learning Research},
  year    = {2015},
  volume  = {16},
  pages   = {77--102}
}

@inproceedings{edelsbrunner2000topological,
  author    = {Edelsbrunner, Herbert and Letscher, David and Zomorodian, Afra},
  title     = {Topological Persistence and Simplification},
  booktitle = {Proceedings 41st Annual Symposium on Foundations of Computer Science},
  year      = {2000},
  pages     = {454--463},
  doi       = {10.1109/SFCS.2000.892133}
}

@article{zomorodian2005computing,
  author  = {Zomorodian, Afra and Carlsson, Gunnar},
  title   = {Computing Persistent Homology},
  journal = {Discrete \& Computational Geometry},
  year    = {2005},
  volume  = {33},
  pages   = {249--274},
  doi     = {10.1007/s00454-004-1146-y}
}

@book{oudot2015persistence,
  author    = {Oudot, Steve Y.},
  title     = {Persistence Theory: From Quiver Representations to Data Analysis},
  publisher = {American Mathematical Society},
  address   = {Providence, RI},
  year      = {2015},
  series    = {Mathematical Surveys and Monographs},
  volume    = {209},
  isbn      = {978-1-4704-2545-6}
}

@article{cohensteiner2007stability,
  author  = {Cohen-Steiner, David and Edelsbrunner, Herbert and Harer, John},
  title   = {Stability of Persistence Diagrams},
  journal = {Discrete \& Computational Geometry},
  year    = {2007},
  volume  = {37},
  number  = {1},
  pages   = {103--120},
  doi     = {10.1007/s00454-006-1276-5}
}

@misc{chazal2013structure,
  author        = {Chazal, Fr{\'e}d{\'e}ric and de Silva, Vin and Glisse, Marc and Oudot, Steve},
  title         = {The Structure and Stability of Persistence Modules},
  year          = {2013},
  eprint        = {1207.3674},
  archivePrefix = {arXiv},
  primaryClass  = {math.AT}
}

@article{patel2018generalized,
  author  = {Patel, Amit},
  title   = {Generalized Persistence Diagrams},
  journal = {Journal of Applied and Computational Topology},
  year    = {2018},
  volume  = {1},
  number  = {3--4},
  pages   = {397--419},
  doi     = {10.1007/s41468-018-0012-6}
}

@article{kimmemoli2021generalized,
  author  = {Kim, Woojin and M{\'e}moli, Facundo},
  title   = {Generalized Persistence Diagrams for Persistence Modules over Posets},
  journal = {Journal of Applied and Computational Topology},
  year    = {2021},
  volume  = {5},
  pages   = {533--581},
  doi     = {10.1007/s41468-021-00075-1}
}

@article{betthauser2021graded,
  author  = {Betthauser, Leo and Bubenik, Peter and Edwards, Parker B.},
  title   = {Graded Persistence Diagrams and Persistence Landscapes},
  journal = {Discrete \& Computational Geometry},
  year    = {2021},
  volume  = {67},
  number  = {1},
  pages   = {203--230},
  doi     = {10.1007/s00454-021-00316-1}
}

@article{divollacombe2021optimalpartialtransport,
  author  = {Divol, Vincent and Lacombe, Thibaut},
  title   = {Understanding the topology and the geometry of the space of persistence diagrams via optimal partial transport},
  journal = {Journal of Applied and Computational Topology},
  year    = {2021},
  volume  = {5},
  number  = {1},
  pages   = {1--53},
  doi     = {10.1007/s41468-020-00061-z}
}

@article{otter2017roadmap,
  author  = {Otter, Nina and Porter, Mason A. and Tillmann, Ulrike and Grindrod, Peter and Harrington, Heather A.},
  title   = {A Roadmap for the Computation of Persistent Homology},
  journal = {EPJ Data Science},
  year    = {2017},
  volume  = {6},
  number  = {17}
}

@misc{fanningaktas2025rkhsvpd,
  author        = {Fanning, Charles and Aktas, Mehmet},
  title         = {Reproducing Kernel Hilbert Spaces for Virtual Persistence Diagrams},
  year          = {2025},
  eprint        = {2512.07282},
  archivePrefix = {arXiv},
  primaryClass  = {math.FA}
}

@misc{fanningaktas2026banachrkhs,
  author        = {Fanning, Charles and Aktas, Mehmet},
  title         = {Reproducing Kernel Hilbert Spaces on Banach Completions of Virtual Persistence Diagram Groups},
  year          = {2026},
  eprint        = {2602.15153},
  archivePrefix = {arXiv},
  primaryClass  = {math.FA}
}

@misc{fanningaktas2026randomwalks,
  author        = {Fanning, Charles and Aktas, Mehmet},
  title         = {Random Walks on Virtual Persistence Diagrams},
  year          = {2026},
  eprint        = {2603.02117},
  archivePrefix = {arXiv},
  primaryClass  = {math.PR}
}

@book{folland2015harmonic,
  author    = {Folland, Gerald B.},
  title     = {A Course in Abstract Harmonic Analysis},
  publisher = {Chapman and Hall/CRC},
  address   = {Boca Raton, FL},
  edition   = {2},
  year      = {2015},
  doi       = {10.1201/b19172}
}

@book{bergchristensenressel1984semigroups,
  author    = {Berg, Christian and Christensen, Jens Peter Reus and Ressel, Paul},
  title     = {Harmonic Analysis on Semigroups: Theory of Positive Definite and Related Functions},
  publisher = {Springer-Verlag},
  address   = {New York},
  year      = {1984},
  volume    = {100},
  series    = {Graduate Texts in Mathematics}
}

@article{adams2017persistenceimages,
  author  = {Adams, Henry and Emerson, Tegan and Kirby, Michael and Neville, Rachel and Peterson, Chris and Shipman, Patrick and Chepushtanova, Sofya and Hanson, Eric and Motta, Francis and Ziegelmeier, Lori},
  title   = {Persistence Images: A Stable Vector Representation of Persistent Homology},
  journal = {Journal of Machine Learning Research},
  year    = {2017},
  volume  = {18},
  number  = {8},
  pages   = {1--35}
}

@article{pun2022survey,
  author  = {Pun, Chi Seng and Lee, Si Xian and Xia, Kelin},
  title   = {Persistent-Homology-Based Machine Learning: A Survey and a Comparative Study},
  journal = {Artificial Intelligence Review},
  year    = {2022},
  volume  = {55},
  pages   = {5169--5213}
}

@article{naitzat2020topology,
  title={Topology of deep neural networks},
  author={Naitzat, Gregory and Zhitnikov, Andrey and Lim, Lek-Heng},
  journal={Journal of Machine Learning Research},
  volume={21},
  number={184},
  pages={1--40},
  year={2020}
}

@article{pegolotti2022fast,
  title={Fast M$\backslash$" obius and Zeta Transforms},
  author={Pegolotti, Tommaso and Seifert, Bastian and P{\"u}schel, Markus},
  journal={arXiv preprint arXiv:2211.13706},
  year={2022}
}

@article{debnath1991structure,
  title={Structure-Activity Relationship of Mutagenic Aromatic and Heteroaromatic Nitro Compounds. Correlation with Molecular Orbital Energies and Hydrophobicity},
  author={Debnath, Asim Kumar and Lopez de Compadre, Rosa L. and Debnath, Gargi and Shusterman, Alan J. and Hansch, Corwin},
  journal={Journal of Medicinal Chemistry},
  volume={34},
  number={2},
  pages={786--797},
  year={1991}
}

@article{morris2020tudataset,
  title={TUDataset: A collection of benchmark datasets for learning with graphs},
  author={Morris, Christopher and Kriege, Nils M. and Bause, Franka and Kersting, Kristian and Mutzel, Petra and Neumann, Marion},
  journal={arXiv preprint arXiv:2007.08663},
  year={2020}
}


\appendix

\section{Higher-order Wasserstein distance algorithms}
\label{app:higher-order-wasserstein}

Given \(\Gamma,\Lambda\in D^{(r)}(X)\), we compute the higher-order Wasserstein distance \(W_p^{(r)}(\Gamma,\Lambda)\). The recursive definition compares atoms through product costs and diagonal transport, which creates repeated subproblems across levels. We evaluate these comparisons by storing intermediate costs and reusing them across the recursion.

For a level-\(k\) atom \(u\in X^{(k)}\), write
\[
d_{\partial}^{(k)}(u):=d_{\mathrm{prod}}^{(k)}(u,A^{(k)}).
\]
For atoms \(u,v\in X^{(k)}\), the strengthened atom cost is
\[
d_1^{(k)}(u,v)
=
\min\left\{
d_{\mathrm{prod}}^{(k)}(u,v),
d_{\partial}^{(k)}(u)+d_{\partial}^{(k)}(v)
\right\}.
\]



Algorithm~\ref{alg:naive-higher-order-wasserstein} uses \(d_{\partial}^{(k)}(u)\) as a diagonal-cost subroutine. We leave this subroutine unexpanded because its implementation depends on the metric pair and preorder.

In the classical setting \(X=\mathbb{R}^2\) with diagonal
\(A=\{(t,t):t\in\mathbb{R}\}\) and an \(\ell^p\)-type ground metric, the
diagonal distance is given by an explicit closed-form expression in
\(|d-b|\), and can be evaluated using a constant number of arithmetic
operations; in particular, it has \(O(1)\) time complexity under the
standard unit-cost model.

When \(A^{(k)}\) is finite, the diagonal distance is computed by scanning all elements of \(A^{(k)}\),
\[
d_{\partial}^{(k)}(u)=\min_{a\in A^{(k)}} d_{\mathrm{prod}}^{(k)}(u,a),
\]
so the cost is linear in \(|A^{(k)}|\). All complexity bounds below assume this implementation.

For \(k\ge 1\), call an element \(u=(\Theta_-,\Theta_+)\in X^{(k)}=D^{(k-1)}(X)\times D^{(k-1)}(X)\) a level-\(k\) hyperinterval. Its endpoints are diagrams, so \(u\) connects the lower-level atoms in \(\operatorname{supp}(\Theta_-)\cup\operatorname{supp}(\Theta_+)\). Hence each diagram in \(D^{(r)}(X)\) defines a graded incidence structure across levels.

The recursive computation of \(W_p^{(r)}(\Gamma,\Lambda)\) defines a directed acyclic graph of states, consisting of diagram comparisons, atom comparisons, and diagonal costs. Each state depends only on lower-level states, so the graph is acyclic. The algorithm evaluates this graph by computing each state once and storing the result.

The strengthened cost \(d_1^{(k)}\) gives an exact certification rule. For \(u=(u_-,u_+)\) and \(v=(v_-,v_+)\),
\[
d_{\mathrm{prod}}^{(k)}(u,v)
=
\left\|
\bigl(
W_p^{(k-1)}(u_-,v_-),
W_p^{(k-1)}(u_+,v_+)
\bigr)
\right\|_p .
\]
The norm \(\|\cdot\|_p\) is monotone on \(\mathbb{R}_{\ge 0}^2\). Thus any partial product cost gives a lower bound on \(d_{\mathrm{prod}}^{(k)}(u,v)\). If this lower bound is at least \(d_{\partial}^{(k)}(u)+d_{\partial}^{(k)}(v)\), then the diagonal term determines \(d_1^{(k)}(u,v)\), and the algorithm skips the product expansion.

\begin{algorithm}[H]
\caption{Na\"{\i}ve \(W_p^{(r)}(\Gamma,\Lambda)\)}
\label{alg:naive-higher-order-wasserstein}
\begin{algorithmic}[1]
\Function{DiagramCost}{$\Gamma,\Lambda,k$}
    \If{$k=0$}
        \State \Return \(d^{(0)}(\Gamma,\Lambda)\)
    \EndIf
    \State Write \(\Gamma=\sum_{i=1}^{m}u_i\) and \(\Lambda=\sum_{j=1}^{\ell}v_j\)
    \For{$1\le i\le m$ and $1\le j\le \ell$}
        \State \(C^{\mathrm{off}}_{ij}\gets \Call{AtomCost}{u_i,v_j,k}\)
    \EndFor
    \For{$1\le i\le m$}
        \State \(C^{\mathrm{left}}_i\gets d_{\partial}^{(k)}(u_i)\)
    \EndFor
    \For{$1\le j\le \ell$}
        \State \(C^{\mathrm{right}}_j\gets d_{\partial}^{(k)}(v_j)\)
    \EndFor
    \State \Return \(\operatorname{Assign}_p(C^{\mathrm{off}},C^{\mathrm{left}},C^{\mathrm{right}})\)
\EndFunction
\Statex
\Function{AtomCost}{$u,v,k$}
    \State \(e\gets d_{\partial}^{(k)}(u)\)
    \State \(f\gets d_{\partial}^{(k)}(v)\)
    \If{$k=0$}
        \State \(c\gets d_{\mathrm{prod}}^{(0)}(u,v)\)
        \State \Return \(\min\{c,e+f\}\)
    \EndIf
    \State Write \(u=(u_1,u_2)\) and \(v=(v_1,v_2)\)
    \State \(a\gets \Call{DiagramCost}{u_1,v_1,k-1}\)
    \State \(b\gets \Call{DiagramCost}{u_2,v_2,k-1}\)
    \State \(c\gets \|(a,b)\|_p\)
    \State \Return \(\min\{c,e+f\}\)
\EndFunction
\Statex
\State \Return \(\Call{DiagramCost}{\Gamma,\Lambda,r}\)
\end{algorithmic}
\end{algorithm}


Classical Wasserstein distances on finite diagrams are computed by solving an assignment problem on the cost matrix, which can be carried out in cubic time in the number of atoms, for instance via the Hungarian algorithm. In the present setting, this assignment step remains unchanged, but the cost matrix itself is defined recursively through higher-order diagram comparisons. This recursive structure induces overlapping subproblems, so the computation admits a dynamic-programming formulation in which costs are reused across repeated substructures rather than recomputed independently.

The naive recursion recomputes identical states, which leads to exponential growth in the number of calls. The certified algorithm stores each state once and skips product expansions when the diagonal cost already determines the minimum. Throughout, diagram sums list atoms with multiplicity.


\begin{algorithm}[H]
\caption{Certified \(W_p^{(r)}(\Gamma,\Lambda)\)}
\label{alg:certified-higher-order-wasserstein}
\begin{algorithmic}[1]
\State Initialize empty tables \(\mathcal M_D,\mathcal M_A,\mathcal M_{\partial},\mathcal M_0\)

\Function{DiagramCost}{$\Gamma,\Lambda,k$}
    \If{$(\Gamma,\Lambda,k)\in\mathcal M_D$}
        \State \Return \(\mathcal M_D[\Gamma,\Lambda,k]\)
    \EndIf
    \If{$k=0$}
        \State \(\mathcal M_D[\Gamma,\Lambda,k]\gets d^{(0)}(\Gamma,\Lambda)\)
        \State \Return \(\mathcal M_D[\Gamma,\Lambda,k]\)
    \EndIf
    \State Write \(\Gamma=\sum_{i=1}^{m}u_i\) and \(\Lambda=\sum_{j=1}^{\ell}v_j\)
    \For{$1\le i\le m$}
        \State \(C^{\mathrm{left}}_i\gets\Call{DiagonalCost}{u_i,k}\)
    \EndFor
    \For{$1\le j\le \ell$}
        \State \(C^{\mathrm{right}}_j\gets\Call{DiagonalCost}{v_j,k}\)
    \EndFor
    \For{$1\le i\le m$ and $1\le j\le \ell$}
        \State \(C^{\mathrm{off}}_{ij}\gets\Call{AtomCost}{u_i,v_j,k}\)
    \EndFor
    \State \(\mathcal M_D[\Gamma,\Lambda,k]\gets
    \operatorname{Assign}_p(C^{\mathrm{off}},C^{\mathrm{left}},C^{\mathrm{right}})\)
    \State \Return \(\mathcal M_D[\Gamma,\Lambda,k]\)
\EndFunction

\Statex

\Function{AtomCost}{$u,v,k$}
    \If{$(u,v,k)\in\mathcal M_A$}
        \State \Return \(\mathcal M_A[u,v,k]\)
    \EndIf
    \State \(e\gets\Call{DiagonalCost}{u,k}\)
    \State \(f\gets\Call{DiagonalCost}{v,k}\)
    \If{$k=0$}
        \State \(c\gets d_{\mathrm{prod}}^{(0)}(u,v)\)
        \State \(\mathcal M_A[u,v,k]\gets\min\{c,e+f\}\)
        \State \Return \(\mathcal M_A[u,v,k]\)
    \EndIf
    \State Write \(u=(u_1,u_2)\) and \(v=(v_1,v_2)\)
    \State \(b\gets
    \begin{aligned}[t]
    \Bigl\|
    \bigl(&
    |\Call{EmptyCost}{u_1,k-1}-\Call{EmptyCost}{v_1,k-1}|,\\
    &
    |\Call{EmptyCost}{u_2,k}-\Call{EmptyCost}{v_2,k}|
    \bigr)
    \Bigr\|_p
    \end{aligned}\)
    \If{$b\ge e+f$}
        \State \(\mathcal M_A[u,v,k]\gets e+f\)
        \State \Return \(\mathcal M_A[u,v,k]\)
    \EndIf
    \State \(a_1\gets\Call{DiagramCost}{u_1,v_1,k-1}\)
    \State \(a_2\gets\Call{DiagramCost}{u_2,v_2,k-1}\)
    \State \(c\gets\|(a_1,a_2)\|_p\)
    \State \(\mathcal M_A[u,v,k]\gets\min\{c,e+f\}\)
    \State \Return \(\mathcal M_A[u,v,k]\)
\EndFunction

\Statex

\State \Return \(\Call{DiagramCost}{\Gamma,\Lambda,r}\)
\end{algorithmic}
\end{algorithm}

Let \(m,\ell\in\mathbb{N}\). Let
\[
C^{\mathrm{off}}=(C^{\mathrm{off}}_{ij})_{1\le i\le m,\ 1\le j\le \ell},
\quad
C^{\mathrm{left}}=(C^{\mathrm{left}}_i)_{1\le i\le m},
\quad
C^{\mathrm{right}}=(C^{\mathrm{right}}_j)_{1\le j\le \ell}
\]
be families of nonnegative real numbers representing, respectively, the
off-diagonal costs between atoms and the left- and right-diagonal costs
associated with unmatched elements in the assignment problem.

For \(1\le p<\infty\), define \( \operatorname{Assign}_p(C^{\mathrm{off}},C^{\mathrm{left}},C^{\mathrm{right}}) =\)
\[
\left(
\min_{\pi}
\left[
\sum_{(i,j)\in\pi} (C^{\mathrm{off}}_{ij})^p
+
\sum_{\substack{1\le i\le m\\ \nexists j:\,(i,j)\in\pi}} (C^{\mathrm{left}}_i)^p
+
\sum_{\substack{1\le j\le \ell\\ \nexists i:\,(i,j)\in\pi}} (C^{\mathrm{right}}_j)^p
\right]
\right)^{1/p}.
\]
The minimum is taken over all subsets \(
\pi\subseteq \{1,\ldots,m\}\times\{1,\ldots,\ell\}
\) such that no two distinct pairs in \(\pi\) share a first coordinate or a second coordinate.

For \(p=\infty\), define \(\operatorname{Assign}_\infty(C^{\mathrm{off}},C^{\mathrm{left}},C^{\mathrm{right}}) =\)
\[
\min_{\pi}
\max\left\{
\max_{(i,j)\in\pi} C^{\mathrm{off}}_{ij},\,
\max_{\substack{1\le i\le m\\ \nexists j:\,(i,j)\in\pi}} C^{\mathrm{left}}_i,\,
\max_{\substack{1\le j\le \ell\\ \nexists i:\,(i,j)\in\pi}} C^{\mathrm{right}}_j
\right\},
\]
where empty maxima are interpreted as \(0\).

The certified algorithm uses a lower bound derived from the reverse triangle inequality. For \(u=(u_1,u_2)\) and \(v=(v_1,v_2)\), applying \(|d^{(k)}(a,0)-d^{(k)}(b,0)|\le d^{(k)}(a,b)\) coordinatewise and using monotonicity of \(\|\cdot\|_p\) on \(\mathbb{R}_{\ge 0}^2\) yields
\[
\left\|
\bigl(
|W_p^{(k)}(u_1,0)-W_p^{(k)}(v_1,0)|,\,
|W_p^{(k)}(u_2,0)-W_p^{(k)}(v_2,0)|
\bigr)
\right\|_p
\le d^{(k+1)}_{\mathrm{prod}}(u,v).
\]
In Algorithm~\ref{alg:certified-higher-order-wasserstein}, \(\operatorname{EmptyCost}(\Theta,k)=W_p^{(k)}(\Theta,0)\), so this expression gives a lower bound on \(d_{\mathrm{prod}}^{(k+1)}(u,v)\) when \(u\) and \(v\) have endpoints in \(D^{(k)}(X)\). If
\begin{align*}
\bigl\|
\bigl(
&|\operatorname{EmptyCost}(u_1,k)-\operatorname{EmptyCost}(v_1,k)|, \\
&|\operatorname{EmptyCost}(u_2,k)-\operatorname{EmptyCost}(v_2,k)|
\bigr)
\bigr\|_p
\ge d_{\partial}^{(k+1)}(u)+d_{\partial}^{(k+1)}(v).
\end{align*}
then the product cost cannot be smaller than the diagonal cost. In this case, the algorithm returns the diagonal term and skips the recursive expansion.

\begin{lemma}\label{lem:safe-pruning}
Let \(u,v\in X^{(k)}\), and let \(b\) be any lower bound for \(d_{\mathrm{prod}}^{(k)}(u,v)\). If
\[
b\ge d_{\partial}^{(k)}(u)+d_{\partial}^{(k)}(v),
\]
then
\[
d_1^{(k)}(u,v)=d_{\partial}^{(k)}(u)+d_{\partial}^{(k)}(v).
\]
\end{lemma}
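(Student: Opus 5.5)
The proof is a direct unwinding of the definition of the strengthened cost. In this appendix the paper writes
\[
d_1^{(k)}(u,v)=\min\bigl\{d_{\mathrm{prod}}^{(k)}(u,v),\,d_{\partial}^{(k)}(u)+d_{\partial}^{(k)}(v)\bigr\}.
\]
This is the same formula as item~5 of the recursive construction in Subsection~\ref{sec:preliminaries}, with the product term there written as $d_{\mathrm{prod}}$ and the diagonal distance written as $d_\partial$. So it suffices to show that the second argument of the minimum is at most the first.

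First, I fix $u,v\in X^{(k)}$ and abbreviate $e:=d_{\partial}^{(k)}(u)$ and $f:=d_{\partial}^{(k)}(v)$. All quantities here lie in $[0,\infty]$, since each is a norm of nonnegative distances or an infimum of such norms, so comparisons make sense even in degenerate cases.

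Second, I chain the two inequalities. Since $b$ is a lower bound for $d_{\mathrm{prod}}^{(k)}(u,v)$, and $b\ge e+f$ by hypothesis, we get
\[
d_{\mathrm{prod}}^{(k)}(u,v)\ge b\ge e+f.
\]
Hence the minimum of $\{d_{\mathrm{prod}}^{(k)}(u,v),\,e+f\}$ is attained at $e+f$, and $d_1^{(k)}(u,v)=e+f$. Ties, where $d_{\mathrm{prod}}^{(k)}(u,v)=e+f$, are harmless because the minimum value is the same either way.

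There is no genuine obstacle. The only point needing care is confirming that the $d_1^{(k)}$ in the statement coincides with the level-$k$ metric $d^{(k)}$ of item~5, i.e.\ that $d_\partial^{(k)}(u)=d^{(k)}(u,A^{(k)})$ as defined in item~4. This holds by the notation fixed at the start of the appendix.

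Finally, I will note the consequence for Algorithm~\ref{alg:certified-higher-order-wasserstein}. The quantity $b$ computed there is a valid lower bound by the reverse triangle inequality and monotonicity of $\|\cdot\|_p$ on $\mathbb R_{\ge0}^2$, as shown just before the lemma. The lemma therefore certifies that the early return of $e+f$ yields the exact value of $d_1^{(k)}(u,v)$.
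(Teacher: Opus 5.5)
Your proof is correct and matches the paper's: both unwind $d_1^{(k)}(u,v)=\min\{d_{\mathrm{prod}}^{(k)}(u,v),\,d_{\partial}^{(k)}(u)+d_{\partial}^{(k)}(v)\}$ and chain $d_{\mathrm{prod}}^{(k)}(u,v)\ge b\ge d_{\partial}^{(k)}(u)+d_{\partial}^{(k)}(v)$ to conclude that the minimum is the diagonal term. The lemma rests only on the appendix's defining formula for $d_1^{(k)}$, so your identification with item~5 and your remarks about the algorithm's bound are not needed for it.
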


\begin{proof}
By definition,
\[
d_1^{(k)}(u,v)
=
\min\{d_{\mathrm{prod}}^{(k)}(u,v),d_{\partial}^{(k)}(u)+d_{\partial}^{(k)}(v)\}.
\]
Since \(b\le d_{\mathrm{prod}}^{(k)}(u,v)\) and \(b\ge d_{\partial}^{(k)}(u)+d_{\partial}^{(k)}(v)\), the product term is at least the diagonal term. Therefore the minimum equals the diagonal term.
\end{proof}

Algorithms~\ref{alg:certified-higher-order-wasserstein}
and~\ref{alg:certified-auxiliary-costs} together implement the certified
computation of \(W_p^{(r)}(\Gamma,\Lambda)\).
Algorithm~\ref{alg:certified-higher-order-wasserstein} performs the
recursive diagram and atom comparisons and invokes the auxiliary
subroutines, while Algorithm~\ref{alg:certified-auxiliary-costs} provides
the cached computations of diagonal and empty costs used in these calls.
All memoization tables are shared across the two algorithms.

\begin{algorithm}[H]
\caption{Cached auxiliary costs}
\label{alg:certified-auxiliary-costs}
\begin{algorithmic}[1]

\Function{EmptyCost}{$\Theta,k$}
    \If{$(\Theta,k)\in\mathcal M_0$}
        \State \Return \(\mathcal M_0[\Theta,k]\)
    \EndIf
    \State Write \(\Theta=\sum_{i=1}^{m}u_i\)
    \State \(\mathcal M_0[\Theta,k]\gets
    \left\|\bigl(\Call{DiagonalCost}{u_i,k}\bigr)_{i=1}^{m}\right\|_p\)
    \State \Return \(\mathcal M_0[\Theta,k]\)
\EndFunction

\Statex

\Function{DiagonalCost}{$u,k$}
    \If{$(u,k)\in\mathcal M_{\partial}$}
        \State \Return \(\mathcal M_{\partial}[u,k]\)
    \EndIf
    \State \(\mathcal M_{\partial}[u,k]\gets d_{\mathrm{prod}}^{(k)}(u,A^{(k)})\)
    \State \Return \(\mathcal M_{\partial}[u,k]\)
\EndFunction

\end{algorithmic}
\end{algorithm}


\begin{theorem}\label{thm:certified-correctness}
Algorithms~\ref{alg:certified-higher-order-wasserstein} and~\ref{alg:certified-auxiliary-costs} compute \(W_p^{(r)}(\Gamma,\Lambda)\).
\end{theorem}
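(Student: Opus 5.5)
We argue by strong induction on the level \(k\) (and, within a level, on the recursion structure) that every value stored in the memo tables equals the mathematical quantity it names. The invariants are:
\begin{align*}
\mathcal M_D[\Gamma,\Lambda,k] &= W_p^{(k)}(\Gamma,\Lambda),\\
\mathcal M_A[u,v,k] &= d_1^{(k)}(u,v),\\
\mathcal M_\partial[u,k] &= d_\partial^{(k)}(u),\\
\mathcal M_0[\Theta,k] &= W_p^{(k)}(\Theta,0).
\end{align*}
Since each table entry is written once, and only after its value has been computed from entries at strictly lower level or from already-verified entries at the same level, memo lookups return correct values whenever the first write was correct. Termination follows because the call graph is acyclic. \textsc{DiagramCost} at level \(k\) calls \textsc{AtomCost} at level \(k\), which calls \textsc{DiagramCost} and \textsc{EmptyCost} only at level \(k-1\). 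All diagrams have finite support, so finitely many states arise. The main theorem is then the invariant for \(\mathcal M_D\) at \((\Gamma,\Lambda,r)\).

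\textbf{Auxiliary costs.} \textsc{DiagonalCost} is correct by definition. For \textsc{EmptyCost}, matching \(\Theta=\sum_i u_i\) against the empty diagram forces every atom to the diagonal. The only admissible matching sends each \(u_i\) to \(A^{(k)}\), so \(W_p^{(k)}(\Theta,0)=\|(d_\partial^{(k)}(u_i))_i\|_p\), which is what is stored.

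\textbf{Level-\(k\) diagram cost.} Assume atom costs at level \(k\) are correct. By the definition of \(W_p^{(k)}\) over finite matchings in the metric pair \((X^{(k)},A^{(k)})\), every matching between \(\Gamma\) and \(\Lambda\) consists of three kinds of pairs: pairs \((u_i,v_j)\) at cost \(d^{(k)}(u_i,v_j)\), atoms \(u_i\) sent to the diagonal at cost \(d_\partial^{(k)}(u_i)\), and diagonal points matched to \(v_j\) at cost \(d_\partial^{(k)}(v_j)\). Diagonal-to-diagonal pairs cost zero and can be discarded. Such matchings correspond exactly to partial injections \(\pi\), so the infimum equals \(\operatorname{Assign}_p\) with the stated cost families. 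Here we use that the ground atom cost is \(d_1^{(k)}=\min\{d_{\mathrm{prod}}^{(k)},\,d_\partial^{(k)}(u)+d_\partial^{(k)}(v)\}\), consistent with item 5 of the level construction. Replacing \(d^{(k)}\) by \(d_1^{(k)}\) in an off-diagonal pair does not change the optimum, because any pair where the diagonal route is cheaper can be split into two diagonal assignments.

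\textbf{Atom cost and pruning (main obstacle).} For \(k\ge1\), write \(u=(u_1,u_2)\) and \(v=(v_1,v_2)\). If the pruning test fails, the algorithm computes \(a_\ell=W_p^{(k-1)}(u_\ell,v_\ell)\) by induction and returns \(\min\{\|(a_1,a_2)\|_p,\,e+f\}=d_1^{(k)}(u,v)\). If the test succeeds, correctness follows from Lemma~\ref{lem:safe-pruning}, once we show that \(b\) is a genuine lower bound for \(d_{\mathrm{prod}}^{(k)}(u,v)\). By the inductive invariant for \(\mathcal M_0\), \(b=\|(|W_p^{(k-1)}(u_1,0)-W_p^{(k-1)}(v_1,0)|,\,|W_p^{(k-1)}(u_2,0)-W_p^{(k-1)}(v_2,0)|)\|_p\). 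The triangle inequality for \(W_p^{(k-1)}\) gives \(|W(a,0)-W(b,0)|\le W(a,b)\) coordinatewise, and monotonicity of \(\|\cdot\|_p\) on \(\mathbb R^2_{\ge0}\) then yields \(b\le d_{\mathrm{prod}}^{(k)}(u,v)\).

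The delicate points are these. First, we need \(W_p^{(k-1)}\) to be a (pseudo)metric satisfying the triangle inequality. We would cite the standard fact for Wasserstein distances on metric pairs \cite{bubenik2022virtualpd}, applied inductively to \(d^{(k-1)}\). Second, the pseudocode's second coordinate is indexed \(\textsc{EmptyCost}(\cdot,k)\) rather than \(k-1\), which we read as a typographical inconsistency and interpret with index \(k-1\), as in the displayed inequality preceding Lemma~\ref{lem:safe-pruning}. With these in place the induction closes.
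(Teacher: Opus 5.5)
Your proposal is correct and takes essentially the same route as the paper's proof: induction on the level, with \textsc{DiagramCost} reduced to the partial-matching problem solved by $\operatorname{Assign}_p$, \textsc{AtomCost} justified either by Lemma~\ref{lem:safe-pruning} when the pruning test fires or by the inductively correct endpoint distances otherwise, and memoization shown not to change any returned value. Your additional checks (correctness of \textsc{EmptyCost}, the reverse-triangle-inequality argument that $b$ is a genuine lower bound, termination, and reading the pseudocode's $k$ as $k-1$) fill in points the paper leaves to the surrounding text rather than changing the argument.
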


\begin{proof}
We argue by induction on the level \(k\). At level \(0\), the algorithm returns the ground cost \(d^{(0)}\), which matches the definition. Assume that all diagram and atom costs below level \(k\) are computed correctly. For \(\Gamma,\Lambda\in D^{(k)}(X)\), the routine \(\operatorname{DiagramCost}\) forms the off-diagonal atom costs \(d_1^{(k)}(u_i,v_j)\) and the diagonal costs \(d_{\partial}^{(k)}(u_i)\), \(d_{\partial}^{(k)}(v_j)\). The operator \(\operatorname{Assign}_p\) then solves exactly the partial matching problem that defines \(W_p^{(k)}(\Gamma,\Lambda)\).

For atom costs, \(\operatorname{AtomCost}\) first computes the diagonal term \(d_{\partial}^{(k)}(u)+d_{\partial}^{(k)}(v)\). If the pruning condition holds, Lemma~\ref{lem:safe-pruning} shows that this term equals \(d_1^{(k)}(u,v)\). Otherwise, the routine computes the product term from the two endpoint diagram distances, which are correct by the induction hypothesis, and returns the minimum of the product and diagonal terms. Thus each atom cost is correct. Memoization only stores and reuses values already computed by these recurrences, so it does not change the returned value. Applying the claim at \(k=r\) proves the theorem.
\end{proof}

Let \(\mathcal V\) denote the set of all hyperinterval vertices that appear in the certified recursion, after pruning, across all levels. A vertex in \(\mathcal V\) is called a source if it has no incoming dependency edge from a higher-level hyperinterval vertex. Let \(c\) denote the number of sources, and let \(N\) denote the maximum remaining recursive depth of any source.

\begin{lemma}\label{lem:structural-bound}
The total number of structural vertices satisfies
\[
|\mathcal V|\le c(2^{N+1}-1).
\]
\end{lemma}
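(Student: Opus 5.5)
The plan is to cover \(\mathcal V\) by the descendant sets of the sources and then bound each descendant set using the binary branching of hyperintervals together with the depth cap \(N\). The argument mirrors the counting in Subsection~\ref{subsec:algorithm}, where each component of the decomposition forest embeds in a full binary tree of depth \(N\).

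First, I show that every vertex of \(\mathcal V\) is reachable from some source. The dependency graph is acyclic, since each state depends only on lower-level states. Starting from any \(w\in\mathcal V\), I repeatedly move to a higher-level hyperinterval vertex with an edge into the current vertex. Levels strictly increase along this walk and are bounded by \(r\), so the walk terminates at a vertex with no incoming edge from a higher level. That vertex is a source. Hence
\[
\mathcal V=\bigcup_{\text{sources } s}\operatorname{Desc}(s),
\]
where \(\operatorname{Desc}(s)\) consists of \(s\) and every vertex reachable from it. It therefore suffices to prove \(|\operatorname{Desc}(s)|\le 2^{N+1}-1\) for each source \(s\) and take a union bound over the \(c\) sources.

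Second, I bound a single descendant set. A hyperinterval \(u=(u_1,u_2)\) that is not pruned spawns recursive calls on exactly the two endpoint comparisons, indexed by the first and second coordinates. So its expansion in the recursion has at most two children, and a pruned vertex has none. Define a map from \(\operatorname{Desc}(s)\) into the full binary tree \(\{0,1\}^{\le N}\) by sending a vertex to its address, the string of left/right choices taken from \(s\). Because the remaining recursive depth of \(s\) is at most \(N\), every address has length at most \(N\). Then
\[
|\operatorname{Desc}(s)|\le\sum_{j=0}^{N}2^j=2^{N+1}-1.
\]

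The main obstacle is that the address map need not be well defined, nor injective in the usual sense. Memoization can merge two distinct paths onto one shared vertex, so a vertex may have several addresses. I would resolve this with a surjectivity count instead of injectivity. Every vertex of \(\operatorname{Desc}(s)\) has at least one address, so the set of valid addresses surjects onto \(\operatorname{Desc}(s)\), and \(\{0,1\}^{\le N}\) contains at most \(2^{N+1}-1\) addresses. Merging can only reduce the count, so the bound holds.

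One point needs care here. The vertices counted are hyperinterval vertices, while the recursion passes through diagram-comparison states between them. These diagram states can have many atoms, so the ``two children'' claim should be read at the level of the decomposition structure \(u\mapsto(u_1,u_2)\), as in Subsection~\ref{subsec:algorithm}. I would state explicitly that the depth is measured in that binary decomposition, and that \(N\) is defined accordingly. Summing over the sources then gives \(|\mathcal V|\le c(2^{N+1}-1)\).
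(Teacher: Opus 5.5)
Your proposal follows the paper's proof: one full binary tree of depth \(N\) per source (at most \(2^{N+1}-1\) vertices), merges that only identify vertices, and a sum over the \(c\) sources; your reachability argument and surjective-address count make explicit what the paper states in one line. The caveat you raise is genuine, since a multi-atom endpoint diagram gives a vertex more than two hyperinterval children and redefining depth does not fix that, but the paper's proof silently relies on the same singleton-endpoint assumption (as in the iterated-aggregation setting), so on that point you are no worse off than the paper.
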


\begin{proof}
Each source generates at most a full binary tree of depth \(N\), which contains at most \(2^{N+1}-1\) vertices. Merges only identify vertices and therefore do not increase their number. Summing over all \(c\) sources yields the result.
\end{proof}

\begin{lemma}\label{lem:diagram-support}
Every diagram appearing in the recursion has support contained in \(\mathcal V\). In particular, every such diagram has at most \(c(2^{N+1}-1)\) atoms.
\end{lemma}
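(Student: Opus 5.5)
We argue by induction on the level \(k\), descending from the top level \(r\) down to level \(1\). The approach is to show that every diagram passed to \(\operatorname{DiagramCost}\), \(\operatorname{EmptyCost}\), or used as an endpoint in \(\operatorname{AtomCost}\) is a finite formal sum of hyperinterval vertices that the recursion actually visits. Once this containment is established, the cardinality bound follows from Lemma~\ref{lem:structural-bound}.

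\emph{Base case: the top-level call \(\operatorname{DiagramCost}(\Gamma,\Lambda,r)\).} The routine writes \(\Gamma=\sum_i u_i\) and \(\Lambda=\sum_j v_j\) and then calls \(\operatorname{DiagonalCost}\) and \(\operatorname{AtomCost}\) on each \(u_i\) and \(v_j\). Hence every atom of \(\Gamma\) and \(\Lambda\) is a hyperinterval vertex of the recursion. None of these atoms has an incoming edge from a higher level, so each is a source in the sense above, and in particular lies in \(\mathcal V\).

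\emph{Inductive step.} Suppose that at level \(k\) the call \(\operatorname{AtomCost}(u,v,k)\) is made with \(u=(u_1,u_2)\) and \(v=(v_1,v_2)\), where \(u,v\in\mathcal V\).

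If pruning does not occur, the routine invokes \(\operatorname{DiagramCost}(u_1,v_1,k-1)\) and \(\operatorname{DiagramCost}(u_2,v_2,k-1)\). Those calls in turn call \(\operatorname{AtomCost}\) and \(\operatorname{DiagonalCost}\) on every atom of the endpoint diagrams \(u_1,u_2,v_1,v_2\). So each such atom is a hyperinterval vertex joined to \(u\) or \(v\) by a dependency edge, and therefore lies in \(\mathcal V\).

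If pruning does occur, the only diagrams that are touched are the endpoints passed to \(\operatorname{EmptyCost}\). This routine also calls \(\operatorname{DiagonalCost}\) on each atom, so these atoms are again visited vertices. If one prefers to read \(\mathcal V\) strictly as the set of vertices expanded after pruning, I would make explicit the convention that \(\mathcal V\) counts every hyperinterval on which a cost routine is evaluated. This is consistent with the phrase ``appear in the certified recursion''.

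Level \(0\) atoms are base points, not hyperintervals, and diagrams at level \(0\) are compared directly through \(d^{(0)}\). They therefore contribute no further vertices, and the induction terminates.

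\emph{Conclusion.} Every diagram appearing in the recursion has support contained in \(\mathcal V\). Since a diagram's support is a set of distinct atoms, the number of distinct atoms of such a diagram is at most \(|\mathcal V|\le c(2^{N+1}-1)\) by Lemma~\ref{lem:structural-bound}. The phrase ``at most \(c(2^{N+1}-1)\) atoms'' is meant in this sense, counting distinct atoms and not multiplicities.

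\emph{Main obstacle.} The one delicate point is how ``vertex appearing in the recursion'' interacts with pruning and with \(\operatorname{EmptyCost}\). Endpoint atoms reached only through \(\operatorname{EmptyCost}\) must be counted in \(\mathcal V\); otherwise the containment fails. I would address this by fixing, at the start of the proof, the definition of \(\mathcal V\) as the set of all arguments to \(\operatorname{DiagonalCost}\) and \(\operatorname{AtomCost}\). With that definition, each containment above is immediate from inspecting the pseudocode, and the bound of Lemma~\ref{lem:structural-bound} applies because that count is exactly the set of vertices bounded there.
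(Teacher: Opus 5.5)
Your proposal is correct and follows the same route as the paper, which argues in two lines that every recursive diagram is obtained from the inputs by passing to endpoint diagrams of hyperintervals, that their atoms are vertices of \(\mathcal V\) by construction, and that Lemma~\ref{lem:structural-bound} then gives the count. Your descending induction over the pseudocode makes this explicit, and your two clarifications are refinements the paper leaves implicit rather than departures from its argument: atoms reached only through \(\operatorname{EmptyCost}\) in a pruned branch must be counted in \(\mathcal V\), and the bound counts distinct atoms rather than multiplicities, which is the only reading under which the ``in particular'' follows from support containment.
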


\begin{proof}
All diagrams appearing in recursive calls are obtained from the input diagrams by passing to the supports of lower-level hyperintervals. By construction, each atom appearing in such a support is a hyperinterval vertex in \(\mathcal V\). Hence, every recursive diagram has support contained in \(\mathcal V\), and the cardinality bound follows from Lemma~\ref{lem:structural-bound}.
\end{proof}

\begin{lemma}\label{lem:comparison-state-bound}
The certified algorithm has at most \(O(N|\mathcal V|^2)\) memoized comparison states.
\end{lemma}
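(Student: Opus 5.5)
We classify the memoized states by type and bound each type separately. The certified algorithm stores values in four tables: \(\mathcal M_D\) for diagram comparisons \((\Gamma,\Lambda,k)\), \(\mathcal M_A\) for atom comparisons \((u,v,k)\), and \(\mathcal M_\partial\) and \(\mathcal M_0\) for the unary auxiliary states \((u,k)\) and \((\Theta,k)\). Each key carries a level index \(k\) with \(0\le k\le r\). We reparametrize the recursion so that only levels within the remaining depth \(N\) of some source are ever reached; hence \(k\) ranges over at most \(N+1\) values. It therefore suffices to show that, for each fixed level, the number of distinct keys in each table is \(O(|\mathcal V|^2)\).

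\textbf{Atom states.} For fixed \(k\), an atom comparison \(\operatorname{AtomCost}(u,v,k)\) is created only inside a call to \(\operatorname{DiagramCost}\) or as the root comparison. In either case \(u\) and \(v\) are atoms in the supports of diagrams appearing in the recursion. By Lemma~\ref{lem:diagram-support}, \(u,v\in\mathcal V\), so there are at most \(|\mathcal V|^2\) atom states per level. The unary states \(\operatorname{DiagonalCost}(u,k)\) number at most \(|\mathcal V|\) per level for the same reason.

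\textbf{Diagram states.} Every diagram comparison below the root arises in \(\operatorname{AtomCost}(u,v,k)\) as \(\operatorname{DiagramCost}(u_1,v_1,k-1)\) or \(\operatorname{DiagramCost}(u_2,v_2,k-1)\), where \(u=(u_1,u_2)\) and \(v=(v_1,v_2)\). The diagram pair is therefore determined by the atom pair \((u,v)\in\mathcal V^2\) together with a choice of coordinate in \(\{1,2\}\). This gives at most \(2|\mathcal V|^2\) diagram states per level, plus the single root state. The same charging argument bounds the \(\operatorname{EmptyCost}\) states by \(O(|\mathcal V|)\) per level: each is indexed by an endpoint diagram of an atom in \(\mathcal V\), and there are two such endpoints per atom.

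\textbf{Summing.} Over at most \(N+1\) levels, the total number of states is \(O\bigl((N+1)(3|\mathcal V|^2+O(|\mathcal V|))\bigr)=O(N|\mathcal V|^2)\), where we absorb the \(N=0\) case by interpreting \(N\) as \(\max(N,1)\).

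\textbf{Main obstacle.} The delicate step is justifying that the number of distinct level indices is \(O(N)\) rather than \(r\). This requires that the levels actually visited are controlled by the remaining recursive depth of the sources in \(\mathcal V\). If that identification is awkward, the fallback is to observe that each vertex of \(\mathcal V\) carries a unique level, determined by its position in the product \(X^{(k)}=D^{(k-1)}(X)\times D^{(k-1)}(X)\). In that case the key \((u,v,k)\) is determined by \((u,v)\) alone. This would give the stronger bound \(O(|\mathcal V|^2)\), which trivially implies the stated one.
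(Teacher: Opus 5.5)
Your proposal is correct and follows essentially the same argument as the paper: atom and diagonal states at each level are indexed by ordered pairs or single vertices of $\mathcal V$ (via Lemma~\ref{lem:diagram-support}), each non-root diagram state is charged to the atom pair whose endpoint components it compares, and the per-level $O(|\mathcal V|^2)$ bound is summed over $O(N)$ levels. The paper asserts the $O(N)$ level count without justification and does not count the $\operatorname{EmptyCost}$ states or the coordinate factor of two; you handle both explicitly, and your fallback observation that each vertex carries its own level gives $O(|\mathcal V|^2)$ outright, which removes the need for the level count altogether.
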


\begin{proof}
At each level, every atom-comparison state is indexed by an ordered pair of hyperinterval vertices in \(\mathcal V\), and every diagonal-cost state is indexed by a single hyperinterval vertex in \(\mathcal V\). Thus these states contribute at most \(O(|\mathcal V|^2)\) states per level.

A diagram-comparison state is created only as a dependency of an atom-comparison state or as the initial input state. Its two diagram arguments are the two lower-level components of the corresponding pair of hyperinterval vertices. Hence, the number of diagram-comparison states at a fixed level is also bounded by the number of ordered pairs of structural vertices at the adjacent higher level, and is therefore \(O(|\mathcal V|^2)\). Summing over at most \(N\) levels gives \(O(N|\mathcal V|^2)\) memoized comparison states.
\end{proof}

We work under the unit-cost model in which table operations are \(O(1)\), assignment on \(m\) atoms costs \(O(m^3)\), and diagonal distances are computed by scanning finite diagonal sets.

\begin{theorem}\label{thm:naive-worst}
Algorithm~\ref{alg:naive-higher-order-wasserstein} has worst-case running time
\[
O\!\left((c2^N)^{3N}\right).
\]
\end{theorem}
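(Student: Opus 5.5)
We bound the running time of Algorithm~\ref{alg:naive-higher-order-wasserstein} by a recurrence on the number of recursive calls at each level, multiplied by the local work per call. Without memoization, every call to \(\operatorname{DiagramCost}\) at level \(k\ge 1\) on diagrams with \(m\) and \(\ell\) atoms triggers \(m\ell\) calls to \(\operatorname{AtomCost}\) at level \(k\). Each of these triggers two calls to \(\operatorname{DiagramCost}\) at level \(k-1\), on the endpoint diagrams. By Lemma~\ref{lem:diagram-support}, every diagram appearing in the recursion has at most \(M:=c(2^{N+1}-1)=O(c2^N)\) atoms. Hence \(m,\ell\le M\) throughout.

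Let \(T(k)\) denote the worst-case cost of a \(\operatorname{DiagramCost}\) call at level \(k\). Such a call does the following work:
\begin{enumerate}
\item it makes \(m\ell\le M^2\) atom-cost calls;
\item each atom-cost call computes two diagonal costs, each scanning a finite diagonal set of size bounded polynomially in \(M\) under the stated implementation, and makes two recursive diagram calls;
\item it evaluates \(m+\ell\) further diagonal costs;
\item it solves one assignment problem \(\operatorname{Assign}_p\) of size \(O(M)\), at cost \(O(M^3)\).
\end{enumerate}
This gives the recurrence
\[
T(k)\le 2M^2\,T(k-1)+O(M^3)+O(M^2\,\delta),
\]
where \(\delta\) is the diagonal-scan cost, with \(T(0)=O(1)\). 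We absorb the diagonal scans into the \(O(M^3)\) term: in the finite-diagonal setting, \(|A^{(k)}|\) restricted to the relevant atoms is \(O(M)\).

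Unrolling the recurrence over the at most \(N\) levels of remaining recursive depth gives
\[
T(N)\le \sum_{j=0}^{N}(2M^2)^j\,O(M^3)=O\bigl(2^N M^{2N+3}\bigr).
\]
We then rewrite this in the stated form. Since \(M=\Theta(c2^N)\) and \(c\ge 1\), we have \(2^N\le M\), so
\[
2^NM^{2N+3}\le M^{2N+4}.
\]
For \(N\ge 4\), this is at most \(M^{3N}\). The finitely many small values of \(N\) are absorbed into the constant: for fixed small \(N\) the bound is polynomial in \(c\) of fixed degree, so it can be adjusted in the \(O\), or one can note \(M^{2N+4}\le M^{3N}\) whenever \(N\ge 4\). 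This yields \(O\bigl((c2^N)^{3N}\bigr)\).

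The main obstacle is justifying that Lemma~\ref{lem:diagram-support} applies to the naive recursion. That lemma was stated for the certified recursion after pruning. For the naive algorithm, we would argue directly that every diagram passed to a recursive call is an endpoint of a hyperinterval in the decomposition forest of the input. Its atoms are therefore nodes of the depth-\(N\) binary trees rooted at the \(c\) sources, and the count \(c(2^{N+1}-1)\) of Lemma~\ref{lem:structural-bound} still applies. Here \(\mathcal V\) is taken to be the set of vertices without pruning, which is only larger, with \(c\) and \(N\) measured accordingly. A secondary point is confirming that the diagonal-scan cost is polynomial in \(M\) of degree at most \(3\) per call, so it does not affect the exponent \(3N\).
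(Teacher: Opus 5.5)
Your proof follows the same route as the paper's: the atom bound from Lemma~\ref{lem:diagram-support} (which the paper applies to the unpruned recursion without the re-justification you sketch), a branching factor of order $M^2$ over $N$ levels, an $O(M^3)$ assignment per call, and a final relaxation of the exponent to $3N$. You are also more careful than the paper about the factor $2$ per level, which the paper hides inside $O\bigl((c2^N)^2\bigr)$ before raising it to the $N$-th power.

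The step that fails is the final conversion. For $N\le 3$, your bound $M^{2N+4}$ has degree $2N+4>3N$ in $c$, so it is not $O\bigl((c2^N)^{3N}\bigr)$ as $c\to\infty$. Fixing $N$ does not help, because a higher power of $c$ cannot be absorbed into the implied constant. For $N\ge 4$, replacing $M^{3N}$ by $(c2^N)^{3N}$ ``since $M=\Theta(c2^N)$'' discards the factor $(M/(c2^N))^{3N}=(2-2^{-N})^{3N}$, which is unbounded in $N$. The paper's proof has the same small-$N$ defect: its claim that $(c2^N)^{3N}$ dominates $(c2^N)^{2N+3}$ for all $N\ge 1$ is false for $N=1,2$.

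The source of the trouble is that your sum $\sum_{j=0}^{N}(2M^2)^j\,O(M^3)$ charges an assignment to the level-$0$ calls. In fact $\operatorname{DiagramCost}(\cdot,\cdot,0)$ just returns $d^{(0)}$ in $O(1)$. Unrolling $T(k)\le 2M^2T(k-1)+O(M^3)$ with $T(0)=O(1)$ therefore gives
\[
T(N)=O\bigl((2M^2)^N+(2M^2)^{N-1}M^3\bigr)=O(2^NM^{2N+1}).
\]
Now compare with $(c2^N)^{3N}$ directly, keeping the slack rather than passing through $M^{3N}$. Using $M<2c2^N$ and $c2^N\ge 2^N$, for all $N\ge 1$ and $c\ge 1$,
\[
2^NM^{2N+1}\le 2^{3N+1}(c2^N)^{2N+1}\le 2^{3N+1-N(N-1)}(c2^N)^{3N}\le 32\,(c2^N)^{3N}.
\]
This gives the stated bound uniformly in $c$ and $N$, with no exceptional small cases.
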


\begin{proof}
By Lemma~\ref{lem:diagram-support}, every diagram has at most \(O(c2^N)\) atoms. A diagram-comparison call forms all pairwise atom comparisons, producing at most \(O((c2^N)^2)\) potential recursive atom-comparison calls. Thus the recursion tree has branching factor \(O((c2^N)^2)\) and depth at most \(N\), which gives at most \(O((c2^N)^{2N})\) recursive calls.

Each call performs an assignment on at most \(O(c2^N)\) atoms, which costs \(O((c2^N)^3)\). Diagonal scans contribute at most \(O((c2^N)^2)\) operations and are therefore dominated by the assignment cost. Hence the total running time is bounded by
\[
O((c2^N)^{2N})\cdot O((c2^N)^3)=O((c2^N)^{2N+3}).
\]
For comparison with later bounds, we use the simpler envelope
\[
O((c2^N)^{3N}),
\]
which dominates \(O((c2^N)^{2N+3})\) for all \(N\ge 1\).
\end{proof}

\begin{theorem}\label{thm:cert-worst}
Algorithm~\ref{alg:certified-higher-order-wasserstein} has worst-case running time
\[
O\!\left(N(c2^N)^5\right).
\]
\end{theorem}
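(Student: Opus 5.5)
The bound will follow by multiplying the number of memoized states by the cost of processing each state once. By Lemma~\ref{lem:structural-bound}, $|\mathcal V|\le c(2^{N+1}-1)=O(c2^N)$. By Lemma~\ref{lem:comparison-state-bound}, the certified recursion creates at most $O(N|\mathcal V|^2)=O(N(c2^N)^2)$ comparison states, counting diagram-comparison, atom-comparison and diagonal-cost states. I will also count the $\operatorname{EmptyCost}$ states in $\mathcal M_0$: each is indexed by a diagram endpoint of a vertex in $\mathcal V$ together with a level, so there are $O(N|\mathcal V|)$ of them. Because every table entry is written once and later retrieved in $O(1)$ time, the total running time is at most the sum, over distinct states, of the local work done when that state is first evaluated, plus $O(1)$ for each lookup.

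Next I bound the local work for each type of state. A $\operatorname{DiagramCost}$ state at level $k$ compares two diagrams. By Lemma~\ref{lem:diagram-support}, each has at most $O(c2^N)$ atoms. Its local work has three parts:
\begin{itemize}
\item $O(c2^N)$ calls to $\operatorname{DiagonalCost}$;
\item $O((c2^N)^2)$ calls to $\operatorname{AtomCost}$, each $O(1)$ beyond its own memoized state;
\item one call to $\operatorname{Assign}_p$ on $O(c2^N)$ atoms, costing $O((c2^N)^3)$ under the stated model.
\end{itemize}
So each diagram state costs $O((c2^N)^3)$. An $\operatorname{AtomCost}$ state does $O(1)$ local work: two diagonal lookups, four $\operatorname{EmptyCost}$ lookups, a norm, a comparison, and possibly two $\operatorname{DiagramCost}$ lookups. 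An $\operatorname{EmptyCost}$ state costs $O(c2^N)$. A $\operatorname{DiagonalCost}$ state scans a finite diagonal set. I will take its size, restricted to the recursion, to be $O(|\mathcal V|)$, which gives $O(c2^N)$ per state.

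Summing the contributions gives the result. Diagram states contribute $O(N(c2^N)^2)\cdot O((c2^N)^3)=O(N(c2^N)^5)$. Atom states contribute $O(N(c2^N)^2)$. Diagonal states contribute $O(N|\mathcal V|)\cdot O(c2^N)=O(N(c2^N)^2)$, and empty-cost states contribute the same. All of these are dominated by the diagram term, which yields $O(N(c2^N)^5)$. Pruning can only remove states, so the bound is a worst case.

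The main obstacle is justifying the $O(N|\mathcal V|^2)$ count for the \emph{diagram-comparison} states, and in particular the fact that each of them pays a full assignment cost. Lemma~\ref{lem:comparison-state-bound} already gives the count, so I will invoke it directly. I also need to check that no state is evaluated twice, which holds because each state is stored before return and there are no cycles: the recursion DAG strictly decreases in level. A secondary point is the cost of a diagonal scan. If $|A^{(k)}|$ is not controlled by $|\mathcal V|$, I will instead note that the diagonal scans total $O(N|\mathcal V|\cdot\max_k|A^{(k)}|)$. I will then state the assumption, implicit in the unit-cost model, that this is dominated by the assignment term.
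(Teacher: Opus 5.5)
Your proposal is correct and follows the paper's proof. The paper also multiplies the $O(N|\mathcal V|^2)$ memoized states from Lemma~\ref{lem:comparison-state-bound} by the $O(|\mathcal V|^3)$ assignment cost per diagram-comparison state, then substitutes $|\mathcal V|=O(c2^N)$ from Lemma~\ref{lem:structural-bound}; your finer per-state accounting and explicit caveat on diagonal scans make precise what the paper asserts in one line, that atom and diagonal states cost no more than diagram states under the finite-diagonal implementation.
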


\begin{proof}
By Lemma~\ref{lem:comparison-state-bound}, the certified algorithm has at most \(O(N|\mathcal V|^2)\) memoized comparison states. Each diagram-comparison state requires solving an assignment problem on at most \(O(|\mathcal V|)\) atoms, which costs \(O(|\mathcal V|^3)\). Atom-comparison and diagonal-cost states are no more expensive than the diagram-comparison states under the finite-diagonal implementation. Therefore, the total cost is bounded by
\[
O(N|\mathcal V|^2)\cdot O(|\mathcal V|^3)=O(N|\mathcal V|^5).
\]
Substituting \(|\mathcal V|=O(c2^N)\) from Lemma~\ref{lem:structural-bound} gives the result.
\end{proof}

The ratio of the worst-case complexity envelopes is
\[
\frac{(c2^N)^{3N}}{N(c2^N)^5}
=
\frac{(c2^N)^{3N-5}}{N}.
\]
This compares the worst-case upper bounds and quantifies the asymptotic gap between recursive recomputation and memoized evaluation.

We consider a full-depth uniform merge model as a synthetic upper envelope when sources have varying depths. Starting from \(t_0=c\), at each stage with \(t_k\) distinct vertices, the \(2t_k\) child slots are identified according to a uniformly random set partition. Thus
\[
\mathbb P(t_{k+1}=t\mid t_k=s)
=
\frac{\genfrac{\{}{\}}{0pt}{}{2s}{t}}{B_{2s}},
\]
where \(\genfrac{\{}{\}}{0pt}{}{2s}{t}\) is a Stirling number of the second kind and \(B_{2s}\) is the Bell number.

A sequence \((t_0,\ldots,t_N)\) with \(t_0=c\) and \(1\le t_{k+1}\le 2t_k\) has probability
\[
\prod_{k=0}^{N-1}
\frac{\genfrac{\{}{\}}{0pt}{}{2t_k}{t_{k+1}}}{B_{2t_k}}.
\]
The structural size of such a realization is \(\sum_{j=0}^N t_j\).

\begin{theorem}\label{thm:naive-average}
Under the uniform merge model, the average running time of Algorithm~\ref{alg:naive-higher-order-wasserstein} admits the upper bound
\[
\sum_{\substack{
t_0 = c \\
1 \le t_{k+1} \le 2t_k,\; 0 \le k < N
}}
\left( \sum_{j=0}^{N} t_j \right)^{3N}
\prod_{k=0}^{N-1}
\frac{\genfrac{\{}{\}}{0pt}{}{2t_k}{t_{k+1}}}{B_{2t_k}}.
\]
\end{theorem}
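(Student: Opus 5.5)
The plan is to condition on the realized merge sequence and reuse the worst-case bound of Theorem~\ref{thm:naive-worst} realization by realization. Under the uniform merge model the process $(t_0,\ldots,t_N)$ is a Markov chain on the positive integers. It starts at $t_0=c$, and its transition probabilities are $\genfrac{\{}{\}}{0pt}{}{2s}{t}/B_{2s}$ for $1\le t\le 2s$. The identity $\sum_{t=1}^{2s}\genfrac{\{}{\}}{0pt}{}{2s}{t}=B_{2s}$ shows these are genuine probability distributions, so the path probabilities displayed before the statement sum to one over all admissible sequences. The expected running time is then $\sum_{(t_k)} T(t_0,\ldots,t_N)\,\mathbb P(t_0,\ldots,t_N)$, where $T$ is the worst-case running time of Algorithm~\ref{alg:naive-higher-order-wasserstein} on any input whose structural graph realizes that sequence. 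It therefore suffices to show $T(t_0,\ldots,t_N)\le C\,(\sum_j t_j)^{3N}$ for a constant $C$ independent of the sequence, which is absorbed into the stated upper bound.

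For the per-realization bound I would rerun the proof of Theorem~\ref{thm:naive-worst}, replacing the structural bound $c(2^{N+1}-1)$ by the realized count $|\mathcal V|=\sum_{j=0}^N t_j$. Here $t_j$ is the number of distinct hyperinterval vertices at depth $j$ after merging, so the level sets partition $\mathcal V$. Lemma~\ref{lem:diagram-support} uses only the fact that every recursive diagram is supported in $\mathcal V$, so every diagram encountered has at most $M:=\sum_j t_j$ atoms. The branching argument then gives at most $O(M^{2N})$ recursive calls. Each call costs an $O(M^3)$ assignment plus $O(M^2)$ diagonal scans. This yields $O(M^{2N+3})\le O(M^{3N})$ for $N\ge 3$, or for all $N\ge1$ once $M\ge 1$ and the constant is enlarged, exactly as in the envelope step of Theorem~\ref{thm:naive-worst}.

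Summing $T\cdot\mathbb P$ over admissible sequences gives the displayed expression. The constraint $1\le t_{k+1}\le 2t_k$ is exactly the support of the Stirling-number transition, so no sequences are lost or added.

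The main obstacle is justifying that the worst-case analysis depends only on $|\mathcal V|$. In particular, merges across levels must not increase recursion depth, and the bound on diagram size must still be $|\mathcal V|$ rather than something depending on the tree shape. I would address this by noting that merging only identifies vertices, as in the proof of Lemma~\ref{lem:structural-bound}, while the depth remains bounded by $N$ because the model is full-depth. The small-$N$ envelope issue, $2N+3$ versus $3N$, I would handle by stating the bound up to a multiplicative constant, which the phrase ``admits the upper bound'' tolerates.
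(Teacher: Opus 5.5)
Your proposal is the paper's proof: fix a realization $(t_0,\ldots,t_N)$, apply the worst-case analysis of Theorem~\ref{thm:naive-worst} with the realized structural size $\sum_j t_j$ in place of $c(2^{N+1}-1)$, and average over the path probabilities; your extra checks (Stirling rows summing to $B_{2s}$, and rerunning Lemma~\ref{lem:diagram-support} with the realized $|\mathcal V|$) only spell out what the paper leaves implicit. One small correction: for $N\in\{1,2\}$ no enlargement of the constant makes $M^{2N+3}\le C\,M^{3N}$ hold for all $M\ge 1$, and the step is rescued only by the upper bound $M\le c(2^{N+1}-1)$, so $C$ must depend on $c$ (the envelope step in Theorem~\ref{thm:naive-worst} glosses over the same point).
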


\begin{proof}
Fix a realization \((t_0,\ldots,t_N)\). By Theorem~\ref{thm:naive-worst}, the running time is bounded above by \(O((\sum_j t_j)^{3N})\). Taking expectations preserves the inequality, since all terms are nonnegative, and gives the stated bound.
\end{proof}

\begin{theorem}\label{thm:cert-average}
Under the uniform merge model, the average running time of Algorithm~\ref{alg:certified-higher-order-wasserstein} admits the upper bound
\[
N
\sum_{\substack{
t_0 = c \\
1 \le t_{k+1} \le 2t_k,\; 0 \le k < N
}}
\left( \sum_{j=0}^{N} t_j \right)^5
\prod_{k=0}^{N-1}
\frac{\genfrac{\{}{\}}{0pt}{}{2t_k}{t_{k+1}}}{B_{2t_k}}.
\]
\end{theorem}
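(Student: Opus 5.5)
The plan mirrors the proof of Theorem~\ref{thm:naive-average}, with the naive worst-case envelope replaced by the certified one from Theorem~\ref{thm:cert-worst}. First fix a realization \((t_0,\ldots,t_N)\) of the uniform merge process with \(t_0=c\) and \(1\le t_{k+1}\le 2t_k\). In that realization the set \(\mathcal V\) of structural hyperinterval vertices has one layer per level, and the layer at stage \(k\) has \(t_k\) distinct vertices. So \(|\mathcal V|=\sum_{j=0}^N t_j\), and this count replaces the crude envelope \(c(2^{N+1}-1)\) of Lemma~\ref{lem:structural-bound}. I will also note that merges only identify child slots, so \(t_{k+1}\le 2t_k\) is consistent with the support bound in Lemma~\ref{lem:diagram-support}: every diagram occurring in the recursion has support inside this \(\mathcal V\).

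Next I rerun the counting argument of Theorem~\ref{thm:cert-worst} with this realization-dependent \(|\mathcal V|\). By Lemma~\ref{lem:comparison-state-bound}, the number of memoized states is \(O(N|\mathcal V|^2)\). Each diagram-comparison state solves an assignment on at most \(|\mathcal V|\) atoms, which costs \(O(|\mathcal V|^3)\). Atom and diagonal states are dominated by this cost. Hence the conditional running time given the realization is bounded by a constant times \(N\bigl(\sum_{j=0}^N t_j\bigr)^5\). The point I will stress is that the proof of Theorem~\ref{thm:cert-worst} used only \(|\mathcal V|\), never the specific value \(c2^N\). The same chain of inequalities therefore holds pathwise.

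Finally, I take expectations over the Markov chain \((t_k)\). Its transition probabilities are \(\genfrac{\{}{\}}{0pt}{}{2s}{t}/B_{2s}\), so the probability of a path is the stated product. The pathwise bound is nonnegative, and for fixed \(c\) and \(N\) the index set of admissible sequences is finite. Linearity of expectation then gives the displayed sum, with the factor \(N\) pulled outside, up to the implicit constant in the \(O\) notation, as in Theorem~\ref{thm:naive-average}.

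The main obstacle is the first step: justifying that the structural size of a realization really equals \(\sum_j t_j\) and that Lemma~\ref{lem:comparison-state-bound} applies with this \(|\mathcal V|\). That lemma was phrased for the deterministic source and depth bookkeeping, and sources of varying depth are exactly what the full-depth merge model is meant to envelope. I would handle this by padding shallower sources to full depth \(N\). Padding can only add vertices, so the true \(\mathcal V\) embeds in the layered vertex set of the model, and the bound is monotone in \(|\mathcal V|\), which makes it an upper envelope.
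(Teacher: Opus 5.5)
Your proposal is correct and follows the paper's own argument: fix a realization, apply the certified worst-case bound pathwise with \(|\mathcal V|=\sum_{j=0}^N t_j\) to get \(O(N(\sum_j t_j)^5)\), then take expectations over the path probabilities, using nonnegativity. You are somewhat more careful than the paper on two points: you note that the bound needed is the \(O(N|\mathcal V|^5)\) form from inside the proof of Theorem~\ref{thm:cert-worst}, not its stated \(c2^N\) form, and that the displayed bound holds only up to the implicit constant. The padding step is unnecessary, though, since under the model the structural size is \(\sum_j t_j\) by definition.
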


\begin{proof}
Fix a realization. By Theorem~\ref{thm:cert-worst}, the running time is bounded above by \(O(N(\sum_j t_j)^5)\). Taking expectations preserves the inequality, since all terms are nonnegative, and gives the stated bound.
\end{proof}

Define the average-case complexity envelopes by the right-hand sides of Theorems~\ref{thm:naive-average} and~\ref{thm:cert-average}. For every realization,
\[
c \le \sum_{j=0}^N t_j \le c(2^{N+1}-1).
\]
Moreover,
\[
\left(\sum_{j=0}^N t_j\right)^{3N}
=
\left(\sum_{j=0}^N t_j\right)^5
\left(\sum_{j=0}^N t_j\right)^{3N-5}.
\]
Thus the ratio of the two envelopes is a weighted average of
\(\left(\sum_j t_j\right)^{3N-5}\) with positive weights. Applying the preceding bounds pointwise gives
\[
\frac{c^{3N-5}}{N}
\le
\frac{
\sum
\left(\sum_j t_j\right)^{3N}
\prod_{k=0}^{N-1}
\frac{\genfrac{\{}{\}}{0pt}{}{2t_k}{t_{k+1}}}{B_{2t_k}}
}{
N
\sum
\left(\sum_j t_j\right)^5
\prod_{k=0}^{N-1}
\frac{\genfrac{\{}{\}}{0pt}{}{2t_k}{t_{k+1}}}{B_{2t_k}}
}
\le
\frac{(c(2^{N+1}-1))^{3N-5}}{N}.
\]

Thus, for these average-case upper envelopes, memoization replaces the \(3N\)-th structural power by a fifth structural power.

\section{Analysis of graph neural networks}\label{sec:gnn}

The GCN experiment uses the MUTAG dataset \cite{debnath1991structure} as provided through the TUDatasets collection \cite{morris2020tudataset}, available at \url{https://chrsmrrs.github.io/datasets/docs/datasets/}. The dataset is publicly available for academic use, but there is no explicit license given on the dataset's webpage.

\begin{figure}[t]
\centering
\includegraphics[width=\linewidth]{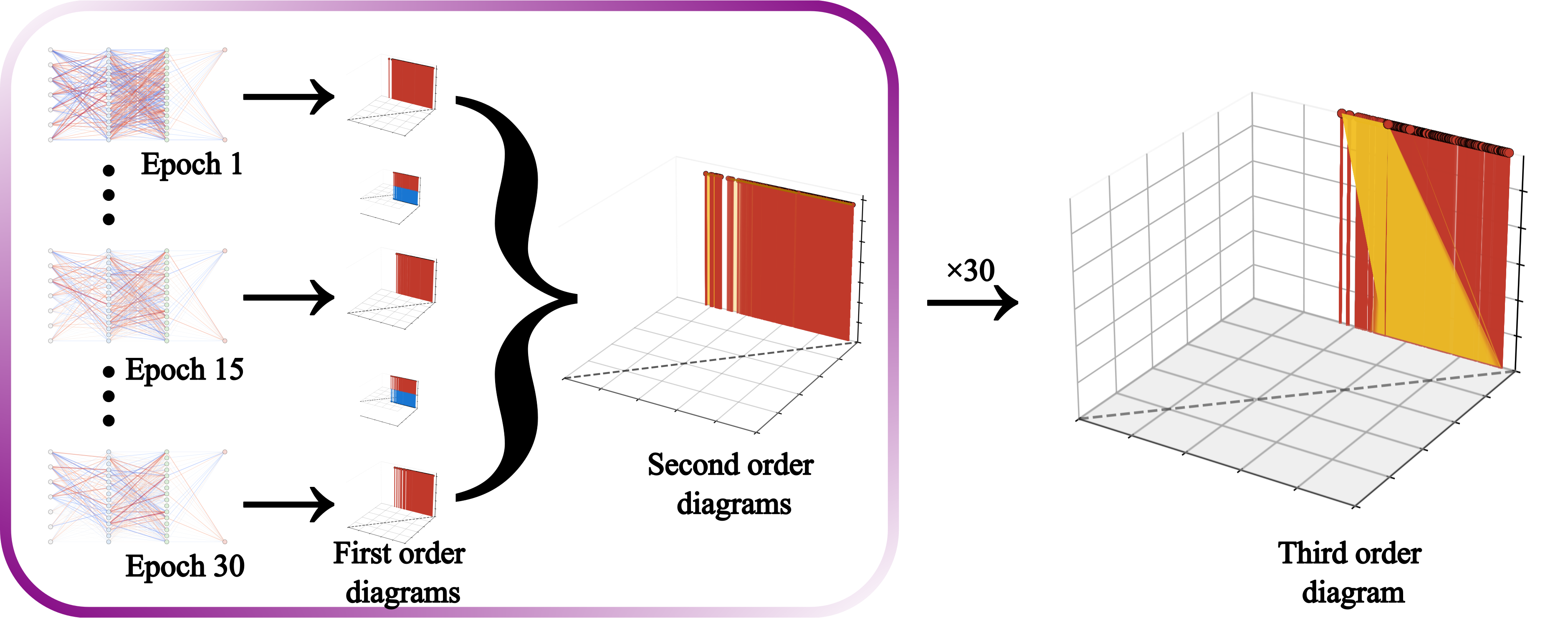}
\caption{
Pipeline for GNN training drift. At each epoch, a weighted parameter graph is extracted and mapped to an \(H_1\) persistence diagram. Consecutive diagrams are differenced to form virtual persistence diagrams. Within each run, these diagrams are aggregated by \(\overline{\mathcal A}_1\) to produce an order-\(2\) diagram. Across \(30\) runs, the resulting order-\(2\) diagrams are aggregated by \(\overline{\mathcal A}_2\) to produce an order-\(3\) diagram summarizing drift structure across runs.
}
\label{fig:gnn-pipeline}
\end{figure}

We fix a two-layer graph convolutional network trained on the MUTAG dataset. Training is performed for \(T=30\) epochs and repeated over \(30\) independent runs. At each epoch \(t\), we associate a parameter graph \(P_t\) whose vertices are feature coordinates across layers and whose edges correspond to scalar entries of the trainable weight matrices. Each edge \(e\) with parameter \(\theta_t(e)\) is assigned filtration value
\[
w_t(e)=|\theta_t(e)|^{-1}, \quad w_t(e)=+\infty \text{ if } \theta_t(e)=0,
\]
so that large-magnitude weights enter first and zero weights appear at infinity. This induces subgraphs \(P_t(s)=\{e : w_t(e)\le s\}\), and we compute degree-one persistence diagrams from the clique filtration \(\mathrm{Cl}(P_t(s))_{s\ge 0}\).

\paragraph{Training drift.}
For consecutive epochs, we consider \(D(P_{t+1}) - D(P_t)\), which records signed changes in cycle structure induced by parameter updates. Within a single run, we aggregate these differences via
\[
\overline{\mathcal A}_1\bigl(
D(P_1)-D(P_0),\;
D(P_2)-D(P_1),\;
\ldots,\;
D(P_T)-D(P_{T-1})
\bigr).
\]

Across independent runs, we form a second-order aggregate:
\begin{align*}
\overline{\mathcal A}_2\!\Bigl(
& \overline{\mathcal A}_1\bigl(
D(P_1^{(1)})-D(P_0^{(1)}), \ldots, D(P_T^{(1)})-D(P_{T-1}^{(1)})
\bigr), \\
& \hspace{11.25em} \vdots \\
& \overline{\mathcal A}_1\bigl(
D(P_1^{(30)})-D(P_0^{(30)}), \ldots, D(P_T^{(30)})-D(P_{T-1}^{(30)})
\bigr)
\Bigr).
\end{align*}
At order \(1\), each diagram lies on the maximal-death line. The parameter graph connects adjacent layers, so its clique complex contains no filled triangles; consequently, the death coordinate is fixed, and the diagram reduces to a distribution of birth values. The virtual differences \(D(P_{t+1})-D(P_t)\) record signed changes in these birth distributions across epochs. Because edge weights are real-valued, exact interval overlap is rare, so cancellation is negligible in this experiment. Aggregation replaces individual birth values by preorder-constrained pairs and higher-order tuples. At order \(3\), the support concentrates on a two-dimensional surface in diagram space, with multiplicities accumulating along that surface. The order-\(3\) diagram records repeated higher-order drift patterns across runs.

\section{Implementation details}
\label{app:implementation-details}

\paragraph{Random network models.}
All experiments used random graphs with \(50\) vertices. The Erd\H{o}s--R\'enyi model used edge probability \(p=0.10\). The Watts--Strogatz model used degree \(4\) and rewiring probability \(0.1\). The Barab\'asi--Albert model used attachment parameter \(m=2\). The configuration model used degree \(4\). The stochastic block model used within-block probability \(0.22\) and between-block probability \(0.04\). The Chung--Lu generator used target average degree \(4.0\). The Kleinberg small-world model used a \(5\times 10\) grid, one long-range edge per vertex, and exponent \(\alpha=2.0\). The geometric inhomogeneous random graph (GIRG) model used \(\tau=2.5\) and \(\alpha=2.0\). The hyperbolic random graph model used temperature \(0.5\). The edge--triangle exponential random graph model used edge parameter \(-1.5\), triangle parameter \(0.1\), and \(3000\) Markov-chain steps. Edge filtration values, defined by the generation process of each model, were normalized before constructing the clique filtration.

\paragraph{Experiment specifications.}
We implemented all experiments in C++ and built them in Release configuration with CMake. The repository script \texttt{src/run\_all.ps1} builds and executes all experimental pipelines, with raw outputs written under \texttt{results/raw/}. For the model-pair speedup experiment, each model uses \(m=30\) sampled graphs with deterministic seed
\[
14 + 1000003(i+1) + 9176(k+1),
\]
where \(i\) is the model index and \(k\) is the sample index. For each pair \((G_k,H_k)\), we compute \(H_1\) persistence diagrams from clique filtrations, form \(D(G_k)-D(H_k)\), and measure explicit mean second-order aggregation and harmonic zeta evaluation separately. Timing uses \texttt{std::chrono::steady\_clock} and records elapsed time in minutes; reported aggregation timings exclude graph generation, persistence computation, Wasserstein matching, and input/output. For runtime scaling experiments, we use \(N=0,\ldots,30\), \(30\) repeats per \(N\), \(30\) samples per aggregate, and base seed \(20260502\), with aggregation time recorded independently for na\"{i}ve and harmonic methods.

\paragraph{Hardware and software specifications.}
We ran the experiments on a Windows 11 Pro system (ASUS ROG STRIX G16CHR\_G16CHR) with an Intel Core i7-14700KF processor (20 cores, 28 logical processors) and 32~GB of RAM. C++ components were compiled using Visual Studio 2022 Build Tools via CMake 4.2.1 in Release configuration. The execution environment used Windows PowerShell 5.1 and Python 3.13.2. The random-graph and aggregation experiments were executed on CPU only. The GCN experiment used PyTorch 2.9.0 with CUDA 12.8 on an NVIDIA GeForce RTX 4080 GPU.

\paragraph{Compute budget.}
We ran all experiments on a single workstation without distributed or cloud resources. Individual aggregation evaluations complete in minutes across the considered problem sizes. The full experimental suite, including all random-graph experiments, model-pair comparisons, and GCN runs, completes in under one hour of total wall-clock time on the reported hardware. The reported experiments correspond to the complete set of runs used for evaluation, and no additional large-scale or unreported compute was required.

\section{Deferred proofs}

\paragraph{Uniform discreteness.}\label{app:proof-uniform-discreteness-levels}
This proves Remark~\ref{rem:uniform-discreteness-levels}.
\begin{proof}
Choose $\varepsilon>0$ such that $d^{(n)}(\Gamma,\Lambda)\ge \varepsilon$ whenever $\Gamma\ne \Lambda$ in $D^{(n)}(X)$. Let $u=(\alpha_1,\alpha_2)$ and $v=(\beta_1,\beta_2)$ be distinct elements of $X^{(n+1)}$. Then at least one of $\alpha_1\ne\beta_1$ or $\alpha_2\ne\beta_2$ holds, and hence $\left\|\bigl(d^{(n)}(\alpha_1,\beta_1),\,d^{(n)}(\alpha_2,\beta_2)\bigr)\right\|_p \ge \varepsilon$. 

If $u\notin A^{(n+1)}$, then $u\ne v$ for every $v\in A^{(n+1)}$. Therefore, for every $v=(\eta_1,\eta_2)\in A^{(n+1)}$, we have $\left\|\bigl(d^{(n)}(\alpha_1,\eta_1),\,d^{(n)}(\alpha_2,\eta_2)\bigr)\right\|_p \ge \varepsilon$, and thus $\inf_{v\in A^{(n+1)}} \left\|\bigl(d^{(n)}(\alpha_1,\eta_1),\,d^{(n)}(\alpha_2,\eta_2)\bigr)\right\|_p \ge \varepsilon$. 

It follows that every nonzero distance in $(X^{(n+1)}/A^{(n+1)},d^{(n+1)}_1)$ is at least $\varepsilon$. Since diagrams in $D^{(n+1)}(X)$ have finite support, any matching between distinct diagrams must contain at least one pair with cost at least $\varepsilon$. Therefore $W_p^{(n+1)}(\Gamma,\Lambda)\ge \varepsilon$ whenever $\Gamma\ne\Lambda$ in $D^{(n+1)}(X)$.
\end{proof}

\paragraph{Quadratic phase aggregation.}\label{app:proof-quadratic-phase-aggregation}
This proves Theorem~\ref{thm:quadratic-phase-aggregation}.
\begin{proof}
The group \(K^{(n)}(X)\) is free abelian on the non-basepoint classes of \(X^{(n-1)}/A^{(n-1)}\). Hence each element \(\xi\in K^{(n)}(X)\) has a unique finite expansion \( \xi=\sum_{i\in I_n}\xi_i e_i^{(n)}. \) Similarly, \(K^{(n+1)}(X)\) is free abelian on the non-basepoint classes of \(X^{(n)}/A^{(n)}\). Accordingly, for \(u,v\in X^{(n)}\), the class \([(u,v)]\) is either a basis element of \(K^{(n+1)}(X)\), when \((u,v)\notin A^{(n)}\), or the zero element, when \((u,v)\in A^{(n)}\).

By the definition of the aggregation operator on basis elements, \( \mathcal B_n(e_i^{(n)},e_j^{(n)}) = \mathbf 1_{\{u_i\preceq^{(n)}u_j\}}[(u_i,u_j)]. \) Since \(K^{(n)}(X)\) is free abelian, this rule extends uniquely to a biadditive map on \(K^{(n)}(X)\times K^{(n)}(X)\). Therefore
\[
\mathcal B_n(\xi,\xi)
=
\sum_{i,j\in\operatorname{supp}(\xi)}
\mathbf 1_{\{u_i\preceq^{(n)}u_j\}}
\xi_i\xi_j[(u_i,u_j)],
\]
where the sum is finite because \(\xi\) has finite support.

Since \(\chi_\theta\) is a group homomorphism, every finite sum satisfies
\[
\chi_\theta\!\left(\sum c_{[(u,v)]}[(u,v)]\right)
=
\exp\left(i\sum c_{[(u,v)]}\theta_{[(u,v)]}\right).
\]
Applying this identity to the preceding expansion of \(\mathcal B_n(\xi,\xi)\), with the convention that \(\theta_{[A^{(n)}]}=0\), proves the desired result.
\end{proof}

\paragraph{Character sum.}\label{app:proof-character-sum}
This proves Corollary~\ref{cor:character-sum}.
\begin{proof}
By Theorem~\ref{thm:quadratic-phase-aggregation},
\[
\chi_\psi(\mathcal B_n(\xi,\xi))
=
\exp\left(
i\sum_{\substack{u,v\in\operatorname{supp}(\xi)}}
\mathbf 1_{\{u\preceq^{(n)}v\}}
\theta_{[(u,v)]}\xi_u\xi_v
\right).
\]
Substituting $\theta_{[(u,v)]}=\psi(v)-\psi(u)$ gives the first identity. For the second, write
\begin{align*}
\sum_{\substack{u,v\in\operatorname{supp}(\xi)}}
\mathbf 1_{\{u\preceq^{(n)}v\}}
(\psi(v)-\psi(u))\xi_u\xi_v
&=
\sum_{\substack{u,v\in\operatorname{supp}(\xi)}}
\mathbf 1_{\{u\preceq^{(n)}v\}}
\psi(v)\xi_u\xi_v \\
& -
\sum_{\substack{u,v\in\operatorname{supp}(\xi)}}
\mathbf 1_{\{u\preceq^{(n)}v\}}
\psi(u)\xi_u\xi_v.
\end{align*}
Since $\operatorname{supp}(\xi)$ is finite, both sums may be rearranged. The first becomes
\[
\sum_{v\in\operatorname{supp}(\xi)}\psi(v)\xi_v
\sum_{\substack{u\in\operatorname{supp}(\xi)\\ u\preceq^{(n)}v}}\xi_u
=
\sum_{v\in\operatorname{supp}(\xi)}\psi(v)\xi_v L_\xi(v),
\]
and the second becomes
\[
\sum_{u\in\operatorname{supp}(\xi)}\psi(u)\xi_u
\sum_{\substack{v\in\operatorname{supp}(\xi)\\ u\preceq^{(n)}v}}\xi_v
=
\sum_{u\in\operatorname{supp}(\xi)}\psi(u)\xi_u U_\xi(u).
\]
Substituting into the exponent gives the second identity.
\end{proof}

\paragraph{Iterated aggregation expansion.}\label{app:proof-iterated-aggregation-expansion}
This proves Lemma~\ref{lem:iterated-aggregation-expansion}.
\begin{proof}
For \(s=1\), each labeling \(\lambda:\{0,1\}\to I_\xi\) corresponds to a pair \((i,j)\in I_\xi^2\), and the formula matches the definition of \(\mathcal B_n(\xi,\xi)\), including the preorder factor. Assume the identity holds at depth \(s\). Since \(\Xi_{s+1}=\mathcal B_{n+s}(\Xi_s,\Xi_s)\), biadditivity expands \(\Xi_{s+1}\) as a sum over pairs \(\lambda_L,\lambda_R:\{0,1\}^s\to I_\xi\). Define \(\lambda:\{0,1\}^{s+1}\to I_\xi\) by \(\lambda(0\varepsilon):=\lambda_L(\varepsilon)\) and \(\lambda(1\varepsilon):=\lambda_R(\varepsilon)\). The application of \(\mathcal B_{n+s}\) contributes the root factor \(\mathbf 1_{\{U_\lambda(0)\preceq^{(n+s)}U_\lambda(1)\}}\), and the induction hypothesis gives the two subtree factors. This gives the formula at depth \(s+1\).
\end{proof}

\paragraph{Aggregation complexity.}\label{app:proof-explicit-aggregation-complexity}
This proves Theorem~\ref{thm:explicit-aggregation-complexity}.
\begin{proof}
Since \(|\operatorname{supp}(\xi)| \le c(2^{N+1}-1)\), the algorithm evaluates at most
\[
\bigl(c(2^{N+1}-1)\bigr)^2 = O(c^2 4^N)
\]
ordered pairs. Each iteration has \(O(1)\) amortized cost. Each contributing pair affects at most one coefficient, so the number of nonzero coefficients is bounded by the same quantity.
\end{proof}

\paragraph{Coboundary harmonic complexity.}\label{app:proof-coboundary-harmonic-complexity}
This proves Theorem~\ref{thm:coboundary-harmonic-complexity}.
\begin{proof}
By the corollary on coboundary characters, the computation reduces to evaluating
\[
Z^-(v)=\sum_{\substack{u\in\operatorname{supp}(\xi)\\ u\preceq^{(n)}v}}\xi_u,
\qquad
Z^+(u)=\sum_{\substack{v\in\operatorname{supp}(\xi)\\ u\preceq^{(n)}v}}\xi_v.
\]

Since the induced preorder has order dimension at most \(d\), there exist maps \(\phi_1,\ldots,\phi_d:\operatorname{supp}(\xi)\to\mathbb R\) such that
\[
u\preceq^{(n)}v
\quad\Longleftrightarrow\quad
\phi_k(u)\le \phi_k(v)
\ \text{for all }1\le k\le d.
\]
Define \(\phi(u)=(\phi_1(u),\ldots,\phi_d(u))\in\mathbb R^d\). Then
\[
Z^-(v)=\sum_{\substack{u\in\operatorname{supp}(\xi)\\ \phi(u)\le \phi(v)}} \xi_u,
\]
where the inequality is coordinatewise.

To compute all values \(Z^-(v)\), first sort \(\operatorname{supp}(\xi)\) by the first coordinate \(\phi_1\). Split the sorted list into two halves, recursively compute contributions within each half, and compute cross contributions from the left half to the right half by maintaining a binary indexed tree over the second coordinate that stores partial sums of coefficients and supports prefix-sum queries in logarithmic time. Applying this procedure recursively over the remaining coordinates gives the recurrence
\[
T_d(n)=2T_d(n/2)+T_{d-1}(n)+O(n),
\]
with base case \(T_1(n)=O(n\log n)\), which solves to \(T_d(n)=O(n\log^{d-1}n)\).

The same procedure computes all values \(Z^+(u)\) in the same time. The remaining sums
\[
\sum_{v}\psi(v)\xi_v Z^-(v),
\qquad
\sum_{u}\psi(u)\xi_u Z^+(u)
\]
compute in time linear in \(|\operatorname{supp}(\xi)|\). Using \(|\operatorname{supp}(\xi)| \le c(2^{N+1}-1)\) gives the desired bound.
\end{proof}



\end{document}